\journal{Automatica}
\newtheorem{definition}{Definition}
\newtheorem{example}{Example}
\newtheorem{proposition}{Proposition}
\newenvironment{proof}{\begin{trivlist} \item[{ \bf Proof:}] }
{~\hfill$\Box$ \end{trivlist} }
\begin{document}

\begin{frontmatter}
\title{Opinion influence and evolution in social networks: a Markovian agents model}

\author[PoliMI]{Paolo Bolzern}
\ead{paolo.bolzern@polimi.it}
\author[PoliMI,CNR]{Patrizio Colaneri}
\ead{patrizio.colaneri@polimi.it}
\author[UniPV]{Giuseppe De Nicolao}
\ead{giuseppe.denicolao@unipv.it}

\address[PoliMI]{Politecnico di Milano, Dipartimento di Elettronica, Informazione e Bioingegneria\\
Piazza Leonardo da Vinci 32, 20133 Milano, Italy}
\address[CNR]{IEIIT-CNR, Milano, Italy}
\address[UniPV]{Universit\`a di Pavia, Dipartimento di Ingegneria Industriale e dell'Informazione\\
Via Ferrata 5, 27100 Pavia, Italy}

\begin{keyword}
Opinion dynamics, Markov chain, Social networks.
\end{keyword}

\begin{abstract}
In this paper, the effect on collective opinions of filtering algorithms managed by social network platforms is modeled and investigated. A stochastic multi-agent model for opinion dynamics is proposed, that accounts for a centralized tuning of the strength of interaction between individuals.
The evolution of each individual opinion is described by a Markov chain, whose transition rates are affected by the opinions of the neighbors through influence parameters.
The properties of this model are studied in a general setting as well as in interesting special cases. A general result is that the overall model of the social network behaves like a high-dimensional Markov chain, which is viable to Monte Carlo simulation. Under the assumption of identical agents and unbiased influence, it is shown that the influence intensity affects the variance, but not the expectation, of the number of individuals sharing a certain opinion.
Moreover, a detailed analysis is carried out for the so-called Peer Assembly, which describes the evolution of binary opinions in a completely connected graph of identical agents. It is shown that the Peer Assembly can be lumped into a birth-death chain that can be given a complete analytical characterization.
Both analytical results and simulation experiments are used to highlight the emergence of particular collective behaviours, e.g. consensus and herding, depending on the centralized tuning of the influence parameters.
\end{abstract}

\end{frontmatter}

\section{Introduction}
\label{Introduction}
The pervasive spread of digital social networks in everyday's life of billions of people has marked a dramatic change both in the news propagation and opinions formation. Compared to traditional media (press, TV and radio) operating through broadcast communication, for the first time in history everyone can reach a global audience by means of horizontal diffusion processes.
In spite of their apparent spontaneous and democratic nature, these processes are actually governed by algorithms that filter the potential information presented to each individual users. A notable example is given by Facebook, where the News Feed of each user features only a fraction of the contents posted by her/his contacts. An undisclosed machine learning algorithm ranks the posts accounting for a huge number of factors, including user affinity, user habits and post recentness, see e.g. \cite{Facebook13}. By tuning the algorithm parameters, the social network company exerts {\it de facto} a content-specific control of one-to-one interactions.

A notable experiment, known as {\it Emotional contagion}, was performed by Facebook itself in 2012 \cite{PNAS14}.
A massive subset of unaware users were exposed, during a week, to a change of emotional content in their News Feed and their emotions were then compared to those of a control group. Citing the authors of \cite{PNAS14}: ``When positive expressions were reduced, people produced fewer positive posts and more negative posts; when negative expressions were reduced, the opposite pattern occurred".
This experiment demonstrated that selectively biasing the intensity of certain interactions can affect the emotions of Facebook users. A similar manipulation effect can be conjectured for opinion diffusion across the social network, just by increasing or decreasing the probability that posts in favor of or against a certain opinion will be displayed.

An example of intervention in a political context occurred during the 2010 US Congressional elections. A subset of some 60 million users were encouraged to vote with a message at the top of their News Feed providing indications to local polling places. Moreover, a counter of how many Facebook users had already reported voting was displayed along with profile pictures of friends that had already clicked the {\it I voted} button, \cite{Bond_Nature12}. The results suggested that the social message brought 340,000 additional votes, mostly by social contagion.

In a recent document released by Facebook \cite{Facebook_security17}, the company raised serious concerns about the threat that malicious government or non-state actors may exploit the unprecedented opportunity of global reach to distort political sentiment, e.g. by diffusing fake news or massive disinformation. The development of algorithms curbing the spread of false news in the users' News Feeds is described as one of the main counteractions. On the other hand, the existence of a centralized control of interactions between individuals is a source of worry about the fairness of the confrontation between conflicting opinions.
The invasiveness of Facebook's curatorial function on users' information diets has been highlighted by the so-called {\it Argentina experiment}, whose results have been recently published and discussed by the World Wide Web Foundation \footnote{``The Invisible Curation of Content: Facebook's News Feed and our Information Diets", April 2018.}. In particular, the algorithms were shown to give much different exposure to different stories, making some of them practically invisible to users.
These ethical issues emerged in full evidence during the recent Senate hearing of Facebook's CEO Mark Zuckerberg reporting on data privacy and disinformation on his social network \footnote{``Transcript of Mark Zuckerberg’s Senate hearing", April 10 2018.}.

In view of all this, there is plenty of motivation for developing mathematical models of opinion dynamics in social networks subject to a centralized tuning of the interactions. In recent years, a number of dynamical models for opinion evolution in social networks, based on multi-agent systems, have been proposed. In this framework, each user is represented as an agent and the social connection is modeled by a graph.
A first classification is between real-valued opinions, typically normalized in the interval $[0,1]$, and discrete opinions, described by logical variables taking values in a discrete set. Within the first category, the works \cite{Friedkin15}, \cite{Garulli17}, \cite{Acemoglu13} are worth mentioning. In most of these works, following the original DeGroot \cite{DeGroot74} and Friedkin-Johnsen \cite{Friedkin11} models, the opinion dynamics of each agent is governed by ordinary differential equations influenced by the weighted average of neighbors' opinions. Different types of agents can be considered (e.g. stubborn or influenceable) and the emergence of collective behaviors is investigated. Recent lines of research concern asynchronous interaction \cite{Ravazzi15}, the extension to multi-variate opinions \cite{Parsegov17} and the possibility for a subset of agents to drive the overall network to agree on any desired opinion \cite{Bolouki17}. The interested reader can refer to the recent tutorial \cite{Proskurnikov17} for a fairly complete overview of these classical methods.

In the class of discrete opinion models, Markov chains, see e.g. \cite{Bremaud98} are often adopted to describe the behavior of multiple agents, so that the time evolution of opinions is stochastic in nature. In this class of models, two important contributions are those in \cite{Banisch12} and \cite{Asavathi01}, both defined in a discrete-time framework. Although the Markov chain model of individual opinion dynamics is very appealing in view of its flexibility, the model of the overall social network may soon become analytically untractable as the number of agents grows. In the references above this problem has been overcome by assuming specific interaction mechanisms such that the complete model can be lumped into a lower dimensional one, amenable to an analytical treatment. For instance, in \cite{Banisch12} the interaction occurs only between agents sharing the same opinion, whereas the influence model of \cite{Asavathi01} assumes at each discrete-time step each agent is affected just by a single influencer randomly drawn among its neighbors.
An issue so far not addressed is the presence of a centralized tuning of interactions, which is of particular interest in the management of social platforms. Another interesting class of stochastic Markovian models are those used in the social learning literature to describe the behaviour of economic agents, including speculative bubbles, herding and consensus, see e.g. \cite{Bowden08}, \cite{Jadbabaie12}, \cite{Eksin13}. A distinctive feature of this stream of research is the assumption on the existence of a true ``state of the world" that the agents try to learn through observations and communication. Accordingly, there exist filtering algorithms operated at individual level. These aspects are not dealt with in the present paper, where instead filtering occurs at a centralized level through selective modulation parameters.

Herein, a new continuous-time model is proposed that describes the stochastic evolution of opinion dynamics within a network of Markov agents whose reciprocal influences are modulated in a centralized way. The influence mechanism is essentially borrowed from the multi-agent Markovian networks studied in \cite{Cerotti10} and \cite{BCCG14}. In this paper we assume that the opinion of each agent is modeled by a finite-state continuous-time Markov chain with transition rates dependent on the opinion of the neighboring agents. To be precise, we assume a linear emulative influence mechanism where the transition rates towards a certain opinion linearly depend on the fraction of neighbors sharing that particular opinion. Notably, we provide an exact analysis of the stochastic model, rather than a mean field approximation, like that in \cite{BCCG14} where the transition rates were assumed to be influenced by the probability of neighbors' opinions. Similar mean-field approximations have also been used in the context of SIS (Susceptible-Infected-Susceptible) epidemic models, see e.g. \cite{VanMieghem11}. As a further difference with respect to our work, the model of \cite{VanMieghem11} deals with two-state Markov chains where only the transition rate to the ill state is affected by neighbors. Our model covers a much wider range of scenarios, the number of states being arbitrary and all transition rates influenceable by interaction.

In order to describe a controlled social network, tunable influence parameters are introduced. Each influence parameter can enhance or curb the spread of a specific opinion. The use of identical influence parameters accelerates opinion dynamics in an unbiased way. Conversely, a bias can be induced by the adoption of unbalanced parameters, up to the limit case of unilateral promotion of a single opinion.
The model is believed to be simple yet flexible enough to account for a variety of possible situations, regarding both the model of the single agent and the centralized manipulation of opinions diffusion.

The purpose of this paper is twofold: first of all, the new general framework is proposed and its basic properties are established. In particular, it is shown that the overall social network is a high dimensional Markov chain (hereafter called Master Markov model). Second, some analytical properties of the model are investigated under particular assumptions.
For identical agents and unbiased influence, we prove that, independently of the network topology, the probability distribution of the opinion of any agent converges towards a common consensus distribution, which is notably not affected by the value of the influence parameter.
In the special case of identical agents, binary opinions and complete network topology (dubbed Peer Assembly in the sequel), the Master Markov model can be lumped into a birth-death Markov chain, thus dramatically simplifying the analysis. In particular, under the additional assumption of unbiased influence, explicit expressions for expectation and variance of the fraction of agents in one opinion are provided.
The emergence of some interesting collective behaviors is also discussed. For instance, an increase of the influence strength parameter, though ineffective with respect to the average fraction of agents in a certain opinion, does increase the volatility of this fraction. Moreover, very large values of the influence parameters eventually trigger a herding phenomenon, giving rise to massive opinion waves.

The paper is organized as follows. After having introduced some notation (Section II), the model of interacting Markovian agents is presented in Section III. Section IV is devoted to show that the model is amenable to marginalization when the agents share the same model and the influence is unbiased. Section V presents a detailed analysis of the so-called Peer Assembly model in both cases of unbiased and biased influence. Finally, the case of general network topologies is illustrated in Section VI by means of some simulation examples. The paper ends with some concluding remarks (Section VII).

\section{Notation}
\label{Notation}

A square matrix $A=[a_{ij}] $ is said to be {\em Metzler} if its off-diagonal entries are nonnegative, namely $a_{ij}\ge 0$ for every $i\ne j$.

An $n \times n$ Metzler matrix $A$, with $n > 1$,  is  {\em reducible} if  there exists a permutation matrix $P$ such that
$$P' A P = \left[\begin{matrix}A_{11} & A_{12}\cr 0 & A_{22}\end{matrix}\right],$$
where $A_{11}$ is  a $k\times k$ matrix, $1 \le k \le n-1$. A Metzler matrix
that is not reducible is called {\em irreducible}, see Chapter 2 of \cite{Berman94}.

The vector ${\bf 1}_n$ is the $n$-dimensional column vector with all entries equal to $1$. The suffix $n$ will be omitted when the vector size is clear from the context. The symbol $\otimes$ stands for the Kronecker product. Given a discrete set $\mathcal N$, the symbol $|\mathcal N|$ denotes its cardinality.

Let $\sigma(A)$ denote the spectrum of a Metzler matrix $A$. It is known that
$\max \{{\rm Re}(\lambda): \lambda\in \sigma(A)\}$ is always an eigenvalue of $A$, called the  {\em Perron-Frobenius eigenvalue} and denoted by $\lambda_F$, and that the corresponding eigenspace, when $A$ is irreducible, is generated by a positive eigenvector, $v_F$, with ${\bf 1}'v_F=1$, called {\em Frobenius eigenvector} \cite{Berman94}, Chapter 2. The set of probability vectors, i.e. vectors with nonnegative entries that sum up to 1 will be indicated as $\mathcal P$.

The symbols $E[v]$ and $Var[v]$ denote the expectation and variance of the random variable $v$. Given a random event $\mathcal{E}$, $\mathcal{I}_{\mathcal E}$ represents the indicator function of the event, namely $\mathcal{I}_{\mathcal E}=1$ if $\mathcal{E}$ occurs, $\mathcal{I}_{\mathcal E}=0$ otherwise. ${\rm Pr}\{\mathcal{E}\}$ will be used to denote the probability of the event $\mathcal{E}$ and ${\rm Pr}\{\mathcal{E}|\mathcal{F}\}$ is the conditional probability of $\mathcal{E}$ given the event $\mathcal{F}$.

\section{A model of interacting Markovian agents}
\label{Model}
The interaction in a social network can be described through an undirected graph ${\mathcal G}=({\mathcal N},{\mathcal E})$, with a finite set of nodes ${\mathcal N}=\{1,2,\ldots,N\}$ representing the individuals (agents) and edges ${\mathcal E} \subseteq {\mathcal N} \times {\mathcal N}$ associated to reciprocal influences. An edge connecting node $r$ to node $s$ means that the two agents influence each other. The set of neighbors of node $r$ will be denoted by ${\mathcal N}^{[r]} = \{s \in {\mathcal N} : (r,s) \in {\mathcal E}\}$.

\subsection{Stand-alone model}
\label{StandaloneModel}
At a given time, each individual in the network has an opinion belonging to a finite set ${\mathcal M}=\{1,2,\ldots,M\}$.
Let $\sigma^{[r]}(t) \in {\mathcal M}$, $r \in {\mathcal N}$, $t \in \mathbb{R}$, be the state of the $r$-th individual at time $t$.
It is assumed that, when isolated, the change in the opinion of the individual $r$ evolves according to a finite-state continuous-time time-homogeneous Markov chain with transition rate matrix $Q^{[r]} \in \mathbb{R}^{M\times M}$. The entries of matrix $Q^{[r]}$ are denoted as $q_{ij}^{[r]}$ and represent the transition rates between opinions. To be more precise, when $i \ne j$, it results that
\begin{equation}
\label{poisson}
\Pr\{\sigma^{[r]}(t+dt)=j|\sigma^{[r]}(t)=i\}=q_{ij}^{[r]}dt+o(dt)
\end{equation}
with $q_{ij} \ge 0$. The diagonal entries of $Q^{[r]}$ are defined as
$$
q_{ii}^{[r]}=-\sum_{j=1,j \ne i}^M q_{ij}^{[r]}
$$
so that $Q^{[r]}$ is a Metzler matrix satisfying $Q^{[r]}{\bf 1}=0$.

As is well known, a Markov chain can be described by the probability of being in a certain state $i$ at time $t$.
Precisely, define $\pi_i^{[r]}(t)=\Pr\{\sigma^{[r]}(t)=i\}$ and let
$$
\pi^{[r]}(t)=[\begin{array}{ccc}\pi_{1}^{[r]}(t)&\dots&\pi_{M}^{[r]}(t)\end{array}]^\prime
$$
be the probability distribution vector at time $t$.
It is well known that, given an initial probability distribution
$$
\pi^{[r]}(0)=[\begin{array}{ccc}\pi_{01}^{[r]}&\dots&\pi_{0M}^{[r]}\end{array}]^\prime
$$
where $\pi_{0i}^{[r]}:=\Pr\{\sigma^{[r]}(0)=i\}$, the time evolution of the probability distribution
$\pi^{[r]}(t)$ from the zero initial time instant obeys the differential equation
\begin{equation}
\label{eq_sa}
\dot\pi^{[r]}(t) = Q^{[r]\prime} \pi^{[r]}(t)
\end{equation}
If the transition rate matrix is irreducible (see, e.g. Chapter 3 of \cite{Bremaud98}), then $\sigma^{[r]}(\cdot)$ is ergodic. Therefore, for any initial $\pi^{[r]}(0) \in \mathcal P$, $\pi^{[r]}(t)$ converges, as $t \to \infty$, to a strictly positive stationary probability vector $\bar\pi^{[r]} \in \mathcal P$ which is the unique unit-sum Frobenius left eigenvector of $Q^{[r]}$ associated with the Perron-Frobenius null eigenvalue, see Chapter 2 of \cite{Berman94}. In the sequel, it is assumed that the matrix $Q^{[r]}$ is irreducible, $\forall r \in {\mathcal N}$. Roughly speaking, this assumption implies that, on any finite time interval, each agent can move from any opinon to any other with nonzero probability.

\subsection{Interaction model}
\label{InteractionModel}
We suppose that the interaction within the social network affects the individual behavior making it different from the stand-alone model of Section \ref{StandaloneModel}. Hence, at time $t$ the transition rate matrix $Q^{[r]}$ is replaced by
\begin{equation}
\label{eq.inter}
\tilde Q^{[r]}\left(\{\sigma^{[k]}(t), k \in {\mathcal N}^{[r]} \}\right)
\end{equation}
i.e. the transition rates are influenced by the neighbors' states.
Therefore, the single agent, conditional on the states of its neighbors, behaves as an inhomogeneous Markov chain. This model will be dubbed as the {\it atomic interaction model}.

In the following, we will focus on the special case where
\begin{equation}
\label{eq.linemu1}
\tilde Q^{[r]} =  Q^{[r]} + A^{[r]}(t)
\end{equation}
where, for $i \ne j$,
\begin{eqnarray}
\label{eq.linemu2}
a_{ij}^{[r]}(t) &=& \frac{\lambda_j}{|{\mathcal N}^{[r]}|} \sum_{k \in {\mathcal N}^{[r]}} \mathcal{I}_{\sigma^{[k]}(t)=j}\\
\label{eq.linemu3}
\mathcal{I}_{\sigma^{[k]}(t)=j} &=& \left\{\begin{array}{ll} 1, & \sigma^{[k]}(t)=j\\ 0, & {\rm otherwise}\end{array}\right.
\end{eqnarray}
and the elements $a_{ii}^{[r]}(t)$ are such that $A^{[r]}(t){\bf 1} = 0$. In this case, the model describes an emulative behavior, namely the instantaneous transition rates to opinion $j$ undergo an increase which is proportional to the number of neighbors that share opinion $j$. In the sequel, such an interaction model will be dubbed as {\it linear emulative}. The parameter $\lambda_j$ reflects the influence strength intensity, that might be different for different opinions. The term {\em unbiased influence} will be used to indicate the case $\lambda_i=\lambda$, $\forall i \in \mathcal M$.
In other cases, the parameter $\lambda_j$ may represent a tuning knob used by the network manager to strengthen (or weaken) the circulation of opinion $j$ within the network, thus exerting a {\em biased influence}. For instance, in Facebook this tuning may result from interventions deployed to curb the spread of fake news.

\begin{rem}
Note that the atomic interaction model is characterized by transition rate matrices for each agent which depend on the state of the neighbors. As such, these are random matrices and the overall network is described by a state-dependent Markov process. This modeling approach is usually referred to as {\it Stochastic Automata Network}, see e.g \cite{Stewart94} and related literature.
In general, the rigorous probabilistic analysis of both the transient and steady-state properties of the model is rather difficult. However, if one is interested in an approximate evaluation of the opinion dynamics, the model is amenable for Monte Carlo distributed simulation, with time-discretization. Precisely, for each agent, the transition probabilities in the interval $[t, t+dt]$ are computed given the state at time $t$ of its neighbors. At time $t+dt$ the transition probabilities are updated and the simulation is iterated. In alternative to discretization, one can resort to a Gillespie-type simulation algorithm, see Chapter 4 of \cite{Gillespie92}.
\end{rem}

\subsection{Master Markov model}
\label{MasterMarkovModel}
Based on the atomic interaction model previously described, a model for the entire network of agents is now introduced. First, consider the case of a network with $N$ non-interacting agents. Each agent $r$ is described by a Markov chain with transition rate matrix $Q^{[r]}$ and all chains are statistically independent. Let
$$
\Sigma(t)=\begin{bmatrix}\sigma^{[1]}(t) & \sigma^{[2]}(t) & \cdots & \sigma^{[N]}(t)\end{bmatrix}^\prime \in \mathcal M^N
$$
be the state of the network and $\pi(t)=\otimes_{r=1}^N\pi^{[r]}(t)$ denote the probability distribution of $\Sigma(t)$. The entries of vector $\pi(t)$ represent the probability at time $t$ of a given configuration of the opinions in the network.
Simple computation shows that, in view of independence, $\pi(t)$ satisfies the differential equation
$$
\dot \pi(t) = Q_0^\prime \pi(t)
$$
with
$$
Q_0 = \sum_{r=1}^N I_{M^{r-1}}\otimes Q^{[r]}\otimes I_{M^{N-r}}
$$
Hence, the network itself is a homogeneous Markov chain whose state is the Cartesian product of the agents' states and has dimension $M^N$. Since the agents' states evolve as independent ergodic processes, the joint process $\Sigma(\cdot)$ is ergodic as well. This in turn implies that $Q_0$ is an irreducible matrix, see e.g. \cite{Bremaud98}, Chapter 3.

Now suppose that the network is formed by interacting agents, with a mutual influence mechanism described as in Section \ref{InteractionModel} by means of eq. \eqref{eq.inter}.

\begin{proposition}
Let $\Sigma(t)\in \mathcal M^N$ be the state at time $t$ of a network composed by $N$ Markovian agents interacting according to eq. \eqref{eq.inter}. Then the process $\Sigma(t)$ is a time-homogeneous Markov chain.
\end{proposition}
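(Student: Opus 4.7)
The plan is to verify directly that the infinitesimal transition probabilities of $\Sigma(\cdot)$ depend only on the current configuration, with no residual dependence on $t$ nor on past history. First I fix an arbitrary $\Sigma = (\sigma_1, \ldots, \sigma_N) \in \mathcal{M}^N$ and condition on the event $\{\Sigma(t) = \Sigma\}$. Under this conditioning, the interaction-modified rate matrix $\tilde Q^{[r]}$ of each agent $r$, as prescribed by \eqref{eq.inter}--\eqref{eq.linemu3}, becomes a deterministic function of $\Sigma$ alone; denote it $\tilde Q^{[r]}(\Sigma)$. Crucially, neither $Q^{[r]}$ nor the coupling terms $a_{ij}^{[r]}(t)$ involve absolute time or pre-$t$ states: they depend only on the instantaneous neighbor configuration.

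Next I would compute the infinitesimal transitions. For $\Sigma'$ differing from $\Sigma$ only in the $r$-th coordinate, with $\sigma_r$ replaced by $j \neq \sigma_r$, the Poisson-type expansion of \eqref{poisson} applied to the atomic interaction model yields
\[
\Pr\{\Sigma(t+dt) = \Sigma' \mid \Sigma(t) = \Sigma\} = \tilde q^{[r]}_{\sigma_r j}(\Sigma)\, dt + o(dt),
\]
with no dependence on the history of the network prior to $t$. For $\Sigma'$ differing from $\Sigma$ in two or more coordinates, the probability is $o(dt)$ because the event requires two or more distinct agents to jump within $[t,t+dt]$, an event of probability $O((dt)^2)$ once the rates are held at their values at $t$. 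Hence $\Pr\{\Sigma(t+dt) = \Sigma \mid \Sigma(t) = \Sigma\} = 1 - \left(\sum_r \sum_{j\neq \sigma_r} \tilde q^{[r]}_{\sigma_r j}(\Sigma)\right) dt + o(dt)$. Because every one of these expressions depends on $(\Sigma,\Sigma')$ only, the Markov property and the time-homogeneity follow at once. The generator of $\Sigma(\cdot)$ is then the $M^N \times M^N$ Metzler matrix $\mathbf{Q}$ whose off-diagonal entry at position $(\Sigma,\Sigma')$ equals $\tilde q^{[r]}_{\sigma_r j}(\Sigma)$ when $\Sigma$ and $\Sigma'$ differ only at coordinate $r$ with $\sigma_r \to j$, zero otherwise, and diagonal entries chosen so that each row sums to zero.

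The main (and essentially only) obstacle is the independence argument for the $o(dt)$ bound on simultaneous jumps of two distinct agents. One might be tempted to invoke the unconditional independence of the stand-alone chains, but the interacting chains are manifestly dependent through their shared state. The correct justification is that, conditionally on $\{\Sigma(t)=\Sigma\}$, the rates $\tilde Q^{[r]}(\Sigma)$ are frozen up to corrections of order $dt$, so that the times of the next jumps of the agents are governed by independent exponential clocks up to such corrections; the probability that two clocks ring within $[t,t+dt]$ is therefore $O((dt)^2)$. Once this observation is secured, the remainder is a standard bookkeeping of infinitesimal terms, and the claim follows by invoking the generator characterization of continuous-time Markov chains (see, e.g., \cite{Bremaud98}).
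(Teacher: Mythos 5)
Your argument is correct and follows essentially the same route as the paper, which disposes of the proposition with a two-sentence remark: conditional on $\Sigma(t)$ the transition rates are determined by the current configuration alone, with no dependence on absolute time or on the pre-$t$ history, whence the Markov property and time-homogeneity. Your additional bookkeeping --- the explicit generator on $\mathcal{M}^N$ and the $o(dt)$ bound ruling out simultaneous jumps --- is a sound elaboration of the same idea and matches what the paper later asserts informally when introducing the matrix $A_0$.
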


Indeed, given the knowledge of $\Sigma(t)$ at a certain $t$, the probability distribution of $\Sigma(t+dt)$ conditional on the past up to time $t$ does coincide with the probability distribution of $\Sigma(t+dt)$ given $\Sigma(t)$, so that the Markov property holds. Moreover, the transition rates do not depend on time.

When the interaction model is linear emulative, see eqs. \eqref{eq.linemu1}-\eqref{eq.linemu3}, the dynamics of the probability distribution $\pi(t)$ is described by
\begin{equation}
\label{eq.master}
\dot \pi(t) = (Q_0+A_0)^\prime \pi(t)
\end{equation}
with a suitable definition of the matrix $A_0$, where $A_0{\bf 1} = 0$. This matrix takes into account both the network topology and the fact that simultaneous opinion jumps of two or more agents are not allowed (the probability of simultaneous jumps in $dt$ is of order $o(dt)$). Moreover, if $Q_0$ is irreducible, so is the matrix $Q_0+A_0$. This implies asymptotic stationarity and ergodicity of the Markovian process $\Sigma(t)$. In particular, for any $\pi(0)$,
$$
\lim_{t \to \infty} \pi(t) = \bar \pi
$$
where $\bar \pi$ is the unique steady state probability distribution.

\begin{example}
As an example, consider a network composed by three agents ($N=3$) jumping between two opinions ($M=2$). Moreover, the interaction graph is complete, i.e. $\mathcal N^{[r]}= \mathcal{N}/{r}$, $\forall r$, and the interaction is linear emulative with parameters $\lambda_1$, $\lambda_2$. In this case the matrix $A_0$ is given by
\begin{equation*}
\left[\begin{smallmatrix}
0 & 0 & 0 & 0 & 0 & 0 & 0 & 0 \\
\lambda_1 & -(\lambda_1+\lambda_2) & 0 & \lambda_2/2 & 0 & \lambda_2/2 & 0 & 0\\
\lambda_1 & 0 & -(\lambda_1+\lambda_2) & \lambda_2/2 & 0 & 0 & \lambda_2/2 & 0\\
0 & \lambda_1/2 & \lambda_1/2 & -(\lambda_1+\lambda_2) & 0 & 0 & 0 & \lambda_2\\
\lambda_1 & 0 & 0 & 0 & -(\lambda_1+\lambda_2) & \lambda_2/2 & \lambda_2/2 & 0 \\
0 & \lambda_1/2 & 0 & 0 & \lambda_1/2 & -(\lambda_1+\lambda_2) & 0 & \lambda_2\\
0 & 0 & \lambda_1/2 & 0 & \lambda_1/2 & 0 & -(\lambda_1+\lambda_2) & \lambda_2\\
0 & 0 & 0 & 0 & 0 & 0 & 0 & 0
\end{smallmatrix}\right]
\end{equation*}
Note that the first and the last rows are zero because they correspond to transitions starting from either $\Sigma(t)=[1 \,\, 1 \,\, 1]^\prime$ or $\Sigma(t)=[2 \,\, 2 \,\, 2]^\prime$, none of which has an emulative incentive. The off-diagonal nonzero entries of the second row corresponds to transitions starting from $\Sigma(t)=[1 \,\, 1 \,\, 2]^\prime$. The emulative increment of the transition probability to $\Sigma(t+dt)=[1 \,\, 1 \,\, 1]^\prime$ is $\lambda_1 dt$ since the third agent emulates the opinion $\sigma=1$ of the first two. The emulative increment of the transition probability to $\Sigma(t+dt)=[1 \,\, 2 \,\, 2]^\prime$ is $\lambda_2 dt/2$ since the second agent changes its opinion to $\sigma=2$ in accordance with the third agent only. All other off-diagonal entries of matrix $A_0$  can be explained with similar arguments.
\end{example}

The state of the master Markov model is of dimension $M^N$, which soon renders its use prohibitive for the evaluation of the probability distribution as the number $N$ of nodes increases. Nevertheless, in view of ergodicity, stochastic simulation is a viable way to assess the stationary properties of the process. In fact, by the ergodic law of large numbers, any statistics of interest, e.g. the steady state probability distribution $\bar \pi$ or the steady state marginal  distribution of agent $r$, $\bar \pi^{[r]}$, can be estimated from a sufficiently long sequence of simulated samples. Details on a possible simulation algorithm are given in \cite{Gillespie92}.

By observing that $\bar \pi$ if the left Frobenius eigenvector of the large dimension matrix $Q_0+A_0$, one may devise alternative distributed algorithms that exploit the sparsity of the matrix, similarly to what has been done for PageRank computation in \cite{Ishii10}.

\section{Marginalization of the linear emulative model}
In the special case of the linear emulative model with identical agents ($Q^{[r]}=Q$) and unbiased influence, i.e. with $\lambda_i=\lambda, \forall i \in \mathcal M$, it is possible to work out a $NM$-dimensional dynamic linear system describing the propagation of the marginal distributions $\pi^{[r]}_j(t)$ of all agents.
Remarkably, irrespective of the network topology, the agents reach asymptotically a probabilistic consensus $\tilde \pi$, coincident with the stand-alone steady state distribution
According to our notation, denote with $\pi^{[r]}_j(t)$ the probability that agent $r$ has opinion $j$ at time $t$.

\begin{thm}
\label{th.marginalization}
For the unbiased influence linear emulative model with an arbitrary network topology, it holds that $\pi^{[r]}_j(t), t\ge 0$, is the solution of the following linear differential equation:
\begin{equation}
  \dot \pi^{[r]}_j(t) = \sum_{i=1}^M q_{ij} \pi^{[r]}_i(t)
  + \lambda \left( \frac{\displaystyle\sum_{k \in {\mathcal N}^{[r]}} \pi^{[k]}_j(t)}{|{\mathcal N}^{[r]}|}  - \pi^{[r]}_j(t)  \right)
  \label{eq.unbmar}
\end{equation}
\end{thm}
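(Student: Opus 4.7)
The plan is to derive the scalar Kolmogorov equation for $\pi^{[r]}_j(t)=\Pr\{\sigma^{[r]}(t)=j\}$ by conditioning on the full joint state $\Sigma(t)$. Since $\Sigma(t)$ simultaneously determines both $\sigma^{[r]}(t)$ and the random rate matrix $\tilde Q^{[r]}(t)$ prescribed by (\ref{eq.linemu1})--(\ref{eq.linemu3}), the infinitesimal update takes the form
$$
\dot\pi^{[r]}_j(t) = E\!\left[\sum_{i\ne j}\mathcal{I}_{\sigma^{[r]}(t)=i}\,\tilde q^{[r]}_{ij}(t) - \mathcal{I}_{\sigma^{[r]}(t)=j}\sum_{i\ne j}\tilde q^{[r]}_{ji}(t)\right].
$$

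I would then substitute the expression for $\tilde q^{[r]}_{ij}(t)$ and split the bracket into a stand-alone contribution, containing only the $q_{ij}$, and an influence contribution, proportional to $\lambda$. Using $q_{jj}=-\sum_{i\ne j}q_{ji}$, the stand-alone contribution collapses to $\sum_{i=1}^M q_{ij}\pi^{[r]}_i(t)$, which is precisely the isolated-agent Kolmogorov term appearing on the right-hand side of (\ref{eq.unbmar}). The influence contribution, instead, generates expressions involving two-agent joint probabilities $p^{rk}_{ij}(t):=\Pr\{\sigma^{[r]}(t)=i,\sigma^{[k]}(t)=j\}$ for each neighbor $k\in\mathcal{N}^{[r]}$, and a priori obstructs closure of the equation in the marginals alone.

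The crux of the argument is to reduce the influence contribution to marginals only. Exploiting the tautology $\sum_{i\ne j}\mathcal{I}_{\sigma^{[r]}=i}=1-\mathcal{I}_{\sigma^{[r]}=j}$, the influence inflow equals $\frac{\lambda}{|\mathcal{N}^{[r]}|}\sum_{k\in\mathcal{N}^{[r]}}\bigl[\pi^{[k]}_j(t)-p^{rk}_{jj}(t)\bigr]$; analogously, using $\sum_{i\ne j}\mathcal{I}_{\sigma^{[k]}=i}=1-\mathcal{I}_{\sigma^{[k]}=j}$, the influence outflow equals $\lambda\pi^{[r]}_j(t)-\frac{\lambda}{|\mathcal{N}^{[r]}|}\sum_{k\in\mathcal{N}^{[r]}} p^{rk}_{jj}(t)$. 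Subtracting outflow from inflow in $\dot\pi^{[r]}_j$, the two $p^{rk}_{jj}$ terms enter with opposite signs and cancel exactly, leaving only the bracket $\lambda\bigl(|\mathcal{N}^{[r]}|^{-1}\sum_k\pi^{[k]}_j(t)-\pi^{[r]}_j(t)\bigr)$ that appears in (\ref{eq.unbmar}).

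The main obstacle is identifying this cancellation, since without it the scalar equation would not close. The cancellation is not accidental: it hinges on the unbiased assumption $\lambda_i=\lambda$ for all $i$, which is exactly what allows the sums over $i\ne j$ to fold into the complementary indicators $1-\mathcal{I}_{\cdot=j}$ and thereby produce matched joint-probability terms in inflow and outflow. For distinct $\lambda_i$ this alignment fails, residual second-order correlations survive, and the marginal equation does not close, confirming that (\ref{eq.unbmar}) is a genuine consequence of unbiasedness rather than of the network topology.
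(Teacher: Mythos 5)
Your proposal is correct and follows essentially the same route as the paper: both write the forward (inflow-minus-outflow) equation for the indicator $\mathcal{I}_{\sigma^{[r]}(t)=j}$, split the rates into the stand-alone part and the emulative part, and use the unbiasedness $\lambda_i=\lambda$ together with $\sum_{i\ne j}\mathcal{I}_{\sigma^{[\cdot]}=i}=1-\mathcal{I}_{\sigma^{[\cdot]}=j}$ to make the second-order terms cancel. The only cosmetic difference is that the paper cancels the cross terms $\mathcal{I}_j^r\Delta_j(r,t)$ before taking expectations, whereas you take expectations first and exhibit the cancellation at the level of the pairwise joint probabilities $p^{rk}_{jj}$ — the algebra is identical.
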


\begin{proof}
For simplicity, we use the abridged notation $\mathcal{I}_j^r(t)$ for the indicator function $\mathcal{I}_{\sigma^{[r]}(t)=j}$ in \eqref{eq.linemu3}. Moreover define
$$
\Delta_i(r,t)=\frac{\lambda_i}{|{\mathcal N}^{[r]}|} \sum_{k \in \mathcal N^{[r]}} \mathcal{I}_i^k(t)
$$
Then, considering the agent $r$ and recalling $\sum_{i \ne j}q_{ji}=-q_{jj}$, we have that, $\forall j \in \mathcal M$,
\begin{eqnarray*}
 &&E[\mathcal{I}_j^r(t+dt)| \Sigma(t)] = \mathcal{I}_j^r(t)\left[1 - dt \sum_{i \ne j}\left( q_{ji} +
  \Delta_i(r,t)\right) \right] \\
  &&+ \left(1-\mathcal{I}_j^r(t)\right)dt\left( \sum_{i \ne j}q_{ij}\mathcal{I}_i^r(t) +
  \Delta_j(r,t)\right) \\
  &=& \mathcal{I}_j^r(t) - \mathcal{I}_j^r(t)dt \left( -q_{jj} + \sum_{i \ne j} q_{ij} \mathcal{I}_i^r(t) +
  \sum_{i} \Delta_i(r,t) \right) \\
  &&+ dt \left( \sum_{i \ne j} q_{ij} \mathcal{I}_i^r(t) + \Delta_j(r,t) \right)
\end{eqnarray*}
By observing that $\sum_{i}\mathcal{I}_i^k(t)=1, \forall k$, $\mathcal{I}_j^r(t)\mathcal{I}_i^r(t)=0, \forall r$, for $i \ne j$ and $\lambda_i=\lambda, \forall i$, we obtain that
\begin{eqnarray*}
 &&E[\mathcal{I}_j^r(t+dt)| \Sigma(t)] =
  \mathcal{I}_j^r(t) - \lambda \mathcal{I}_j^r(t)dt  \\
  &&\hspace{1cm}+ dt \left( \sum_{i} q_{ij} \mathcal{I}_i^r(t) + \frac{\lambda}{|{\mathcal N}^{[r]}|} \sum_{k \in \mathcal N^{[r]}} \mathcal{I}_j^k(t) \right)
\end{eqnarray*}
By taking the expectation and noticing that $E[\mathcal{I}_j^r(t)]=\pi_j^{[r]}(t)$, the differential equation \eqref{eq.unbmar} directly follows.
\end{proof}

Note that, if the initial distribution probabilities of all agents' opinions are equal ($\pi^{[r]}_j(0)$ does not depend on $r$ for all $j$), then the term between brackets in \eqref{eq.unbmar} is null for all times and \eqref{eq.unbmar} boils down to the stand-alone time-evolution described by \eqref{eq_sa}. The next theorem shows that the steady-state probabilities of the stand-alone case are recovered asymptotically even when the initial probability distributions of the agents' opinions are different.

\begin{thm}
\label{th.gattosenzatopo}
For the unbiased influence linear emulative model with an arbitrary network topology, the probability distributions of the agents' opinions $\pi^{[r]}(t)$ reach asymptotically a steady-state consensus represented by the steady-state stand-alone distribution $\tilde \pi$, independently of the initial probability distribution.
\end{thm}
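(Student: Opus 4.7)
The plan is to stack the $N$ marginal equations of Theorem \ref{th.marginalization} into one $NM$-dimensional linear ODE, exploit its Kronecker product structure, and deduce the claim from a standard spectral argument.

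I would introduce the row-stochastic averaging matrix $W \in \R^{N \times N}$ with $W_{rk} = 1/|\mathcal N^{[r]}|$ for $k \in \mathcal N^{[r]}$ and $W_{rk} = 0$ otherwise, and stack the marginals as $\Pi(t) = [\pi^{[1]}(t)^\prime,\ldots,\pi^{[N]}(t)^\prime]^\prime$. With this notation \eqref{eq.unbmar} becomes
\begin{equation*}
\dot \Pi(t) = H\, \Pi(t), \qquad H := I_N \otimes Q^\prime + \lambda (W - I_N) \otimes I_M,
\end{equation*}
so that the claim reduces to $\Pi(t) \to \mathbf{1}_N \otimes \tilde \pi$.

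The two Kronecker summands of $H$ commute, since $(A \otimes I_M)(I_N \otimes B) = A \otimes B = (I_N \otimes B)(A \otimes I_M)$; hence
\begin{equation*}
\sigma(H) = \{\mu_Q + \lambda(\mu_W - 1) : \mu_Q \in \sigma(Q^\prime),\ \mu_W \in \sigma(W)\},
\end{equation*}
with tensor-product eigenvectors $u \otimes v$. Two standard spectral facts then do the work: (i) by irreducibility of $Q$ and $Q\mathbf{1}=0$, Perron--Frobenius gives $0$ as a simple eigenvalue of $Q^\prime$ with eigenvector $\tilde\pi$ and all other eigenvalues strictly in the open left half-plane; (ii) for a connected undirected graph $W = D^{-1} A$ is similar via $D^{1/2}$ to the symmetric normalised adjacency $D^{-1/2} A D^{-1/2}$, hence diagonalisable with real spectrum in $[-1,1]$, the value $1$ being simple and generated by $\mathbf{1}_N$. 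Combining (i) and (ii), $0$ is a simple eigenvalue of $H$ with eigenvector $\mathbf{1}_N \otimes \tilde\pi$ and all other eigenvalues of $H$ have strictly negative real part, so $e^{Ht}\Pi(0) \to c\,(\mathbf{1}_N \otimes \tilde\pi)$ for some scalar $c$ depending on $\Pi(0)$.

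To pin down $c = 1$ I would invoke conservation of probability: premultiplying the $r$-th block of \eqref{eq.unbmar} by $\mathbf{1}_M^\prime$ and using $\mathbf{1}_M^\prime Q^\prime = 0$ together with the zero-sum structure of the averaging bracket yields $\tfrac{d}{dt}\mathbf{1}_M^\prime \pi^{[r]}(t) = 0$; since $\mathbf{1}_M^\prime \pi^{[r]}(0) = 1$, this forces $\mathbf{1}_M^\prime (c\tilde\pi) = c = 1$, completing the proof for a connected graph; a disconnected graph is handled by applying the argument to each connected component. I anticipate the only mildly delicate point to be the assertion that $1$ is a simple isolated eigenvalue of $W$, which rests on connectedness and the above symmetrisation; once the Kronecker commutation identity is observed, the remainder of the argument is essentially bookkeeping.
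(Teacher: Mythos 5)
Your proof is correct, but it takes a genuinely different route from the paper's. The paper argues at the level of the master Markov model: irreducibility of $Q_0+A_0$ gives ergodicity, so the joint distribution $\pi(t)$ --- and hence every marginal $\pi^{[r]}(t)$ --- converges to a limit independent of the initial condition; that limit must be an equilibrium of \eqref{eq.unbmar}, and since $\pi^{[r]}\equiv\tilde\pi$ for all $r$ is readily checked to be such an equilibrium, uniqueness of the limit identifies the two. You instead work entirely inside the $NM$-dimensional marginal system and never invoke the master chain: the Kronecker-sum structure, with $0$ a simple eigenvalue of $Q^\prime$ by Perron--Frobenius and $1$ a simple eigenvalue of $W$ by connectedness and symmetrizability of $D^{-1/2}AD^{-1/2}$, shows $0$ is a simple eigenvalue of $H$ and all other eigenvalues lie strictly in the left half-plane. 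This is self-contained (it does not rely on the unproved assertion that $Q_0+A_0$ is irreducible) and yields, as a bonus, an explicit exponential convergence rate from the spectral gap of $H$, which the ergodicity argument cannot give; the price is that it uses the undirectedness of the graph, whereas the paper's argument would survive under any hypothesis guaranteeing irreducibility of the master generator. Two small repairs: (i) the claim $\tfrac{d}{dt}\mathbf{1}_M^\prime\pi^{[r]}(t)=0$ is not a row-by-row identity for arbitrary $\Pi$, since the averaging bracket contributes $\lambda\bigl(\sum_k W_{rk}s_k-s_r\bigr)$ with $s_k=\mathbf{1}_M^\prime\pi^{[k]}$; but the stacked vector $s$ satisfies $\dot s=\lambda(W-I_N)s$ with $s(0)=\mathbf{1}_N$, hence $s\equiv\mathbf{1}_N$ and your normalization $c=1$ stands (or simply note the $\pi^{[r]}(t)$ are marginals of a probability distribution and sum to one by construction); (ii) for $\lambda=0$ the zero eigenvalue of $H$ has multiplicity $N$, but then the blocks decouple and each converges to $\tilde\pi$ by ergodicity of $Q$ alone, so the theorem still holds in that degenerate case.
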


\begin{proof}
Since the matrix $Q_0+A_0$ is irreducible, the master Markov model is ergodic. Then, the solution $\pi(t)$ of \eqref{eq.master} asymptotically tends to a unique steady state probability distribution $\bar\pi$. In turn, the marginal distribution $\pi^{[r]}(t)$ of each agent converges to a unique steady-state value $\tilde \pi^{[r]}$ for any initial condition. This means that $\tilde \pi_j^{[r]}$, $j \in \mathcal M$, $r \in \mathcal N$ specify an equilibrium point of the system of differential equations \eqref{eq.unbmar}. On the other hand, it is easy to verify that the steady-state stand-alone distribution $\tilde \pi$ is an equilibrium point of \eqref{eq.unbmar}. This concludes the proof.
\end{proof}

\begin{rem}
The marginalized model \eqref{eq.unbmar} can be interpreted as an extension to continuous-time and generalization of the so-called {\it Influence model} presented in \cite{Asavathi01}. As a main difference, we assume here that each agent is simultaneously influenced by all its neighbors, whereas in \cite{Asavathi01} the interaction mechanism is based on the selection of a single randomly chosen influencing agent at each time step. Thanks to this last assumption, the marginalization of the underlying master model is always guaranteed. However, in that model it seems impossible to treat the case of biased influence.
\end{rem}

\section{A special case: the Peer Assembly}
\label{peer}
The Master Markov Model of the previous section, although of high dimension ($M^N$), can be studied with standard tools of finite-state Markov chains. However the analysis of the effects of the interaction parameters $\lambda_j$ may soon become intractable as the state dimension grows, unless the analysis is restricted to special cases. In this section, we consider a social network composed by $N$ identical individuals with binary opinions ($M=2$), interconnected by a complete graph (each individual communicates with all others) and sharing the same stand-alone irreducible $2\times2$ transition rate matrix $Q=[q_{ij}]$ (the attitude of agents to opinion changes when isolated is identical for all individuals). In view of irreducibility, $q_{ij} \ne 0, i \ne j$. It is also assumed that the initial opinions of each agent are independent and identically distributed random variables. Moreover linear emulative interaction is assumed, as described by eqs. \eqref{eq.linemu1}-\eqref{eq.linemu3}, namely the increment of the transition rate of each agent towards a certain opinion is linearly influenced by the number of neighbors that share that opinion. This model will be referred to as the {\it Peer Assembly} (PA) model.

Due the intrinsic indistinguishability of the agents, this model can be lumped into a birth-death Markov process, see Chapter 6 of \cite{Kemeny76} for a discussion on lumpability of Markov models and Section 7.4 of \cite{Cassandras08} for classical birth-death Markov processes. The state of this process is the number $n_1(t)\in \{0,1,\ldots,N\}$ of individuals having opinion 1 at time $t$, namely $n_1(t)=\sum_{r=1}^{N} \mathcal{I}_{\sigma^{[r]}(t)=1}$. Note that, since the opinion is binary, the number $n_2(t)$ of individuals having opinion 2 at time $t$ is $n_2(t)=N-n_1(t)$.
Obviously, the cardinality of the state space of the PA model is $N+1$, dramatically reducing the original cardinality $2^N$ of the Master Markov model. The PA model is well suited to describe the time evolution of the opinion share, a notable example being the case of election polls.

\begin{definition}
  A birth-death chain is a continuous-time Markov chain with tridiagonal transition rate matrix $\Psi=[\psi_{i,j}]$. The birth rates are the upper diagonal entries $\mu_j = \psi_{j,j+1}, j \ge 1$, while the death rates are the lower diagonal entries $\nu_j = \psi_{j+2,j+1}, j \ge 0$. The (nonpositive) diagonal entries are such that the entries of each row sum to zero.
\end{definition}

In our case, $\mu_j dt={\rm Pr}\{n_1(t+dt)=j|n_1(t)=j-1\}$ is the probability that the number of the individuals with opinion 1 increases from $j-1$ to $j$ in the interval $dt$. Likewise,
$\nu_j dt={\rm Pr}\{n_1(t+dt)=j|n_1(t)=j+1\}$ is the probability that the number of the individuals with opinion 1 decreases from $j+1$ to $j$ in the interval $dt$.

\begin{proposition}
  For the PA model, the number $n_1(t)$ of individuals sharing opinion 1 at time $t$ evolves as a finite-state irreducible birth-death chain with
\begin{eqnarray}
\label{eq.mu}
  \mu_j &=& \left(q_{21} + \lambda_1 \frac{j-1}{N-1} \right)(N-j+1), \quad 1 \le j \le N \\
\label{eq.nu}
  \nu_j &=& \left(q_{12} + \lambda_2 \frac{N-(j+1)}{N-1} \right)(j+1), \quad 0 \le j \le N-1
\end{eqnarray}
\end{proposition}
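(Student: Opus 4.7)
The plan is to exploit the permutation symmetry of the model (identical agents plus complete graph) to show that the count $n_1(t)$ is itself a continuous-time Markov chain, and then to read off the birth and death rates directly from the linear emulative rule. I would invoke the standard Kemeny--Snell criterion for lumpability: the partition of $\mathcal M^N$ induced by the value of $n_1$ yields a Markov chain on $\{0,1,\ldots,N\}$ provided that, for every pair of classes indexed by $k$ and $k'$, the total infinitesimal rate from any configuration $\Sigma$ with $n_1(\Sigma)=k$ into the class $\{n_1=k'\}$ depends only on $k$. I would verify this condition directly from the atomic interaction model.

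First I would observe that, because the chain described by \eqref{eq.master} has no simultaneous jumps, $n_1(t)$ can change only by $\pm 1$ in an infinitesimal interval, so only $k'=k\pm 1$ matters. Fix a configuration $\Sigma$ with $n_1(\Sigma)=k$. Since the graph is complete, every agent $r$ has $|\mathcal N^{[r]}|=N-1$ neighbors consisting of all other agents; hence an agent $r$ with $\sigma^{[r]}=2$ has exactly $k$ neighbors with opinion $1$, and by \eqref{eq.linemu1}--\eqref{eq.linemu3} its instantaneous transition rate to opinion $1$ equals $q_{21}+\lambda_1 \tfrac{k}{N-1}$. Summing over the $N-k$ agents currently in opinion $2$ yields a total upward rate
\begin{equation*}
(N-k)\left(q_{21}+\lambda_1\tfrac{k}{N-1}\right),
\end{equation*}
which depends on $\Sigma$ only through $k$. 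Setting $j=k+1$ recovers \eqref{eq.mu}. An entirely symmetric count over the $j+1$ agents in opinion $1$ when $n_1=j+1$, each having $N-j-1$ neighbors in opinion $2$ and hence rate $q_{12}+\lambda_2\tfrac{N-j-1}{N-1}$ towards opinion $2$, gives \eqref{eq.nu}. Since only nearest-state transitions occur, the lumped generator is tridiagonal, so $n_1(t)$ is a birth--death chain with the stated rates.

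Finally, irreducibility follows immediately from the standing assumption that $Q$ is irreducible, i.e.\ $q_{12},q_{21}>0$: this forces $\mu_j>0$ for all $1\le j\le N$ and $\nu_j>0$ for all $0\le j\le N-1$, so every pair of states in $\{0,1,\ldots,N\}$ communicates. The main subtlety to be careful about is the lumpability step: one must confirm that the aggregated transition rate into an entire class of configurations (not just into one specific configuration) depends only on $k$, which is exactly what the combination of identical stand-alone generators, common influence parameters, and the complete-graph topology ensures. Once that is in place, the computation of $\mu_j$ and $\nu_j$ is a straightforward counting argument, and no further probabilistic machinery is needed.
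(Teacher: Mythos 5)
Your proof is correct and follows essentially the same counting argument as the paper: with $n_1 = j-1$, each of the $N-j+1$ agents holding opinion 2 switches to opinion 1 at rate $q_{21}+\lambda_1 (j-1)/(N-1)$, and the death rates follow symmetrically. The only difference is that you make the lumpability verification explicit, whereas the paper asserts lumpability in the text preceding the proposition (citing Kemeny--Snell) and confines the proof itself to the rate computation.
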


\begin{proof}
Assume that, at time $t$, $n_1(t)=j-1$. In order to have $n_1(t+dt)=j$ it is necessary that one out of the $N-j+1$ agents having opinion 2 switches to opinion 1 in the time interval $dt$. For each of those agents, the probability of switching is $(q_{21} + \lambda_1 (j-1)/(N-1))dt$. This proves formula \eqref{eq.mu}. The formula \eqref{eq.mu} for $\nu_j$ follows from similar arguments.
\end{proof}

Let $p(t)=[p_0(t) \,\, p_1(t) \,\, \cdots \,\, p_N(t)]^\prime$ denote the probability distribution of $n_1(t)$, i.e.
$p_i(t)={\rm Pr}\{n_1(t)=i\}$. It is well known that
\begin{equation}
\label{eq.trans}
\dot p(t) = \Psi' p(t)
\end{equation}
It is interesting to obtain the steady state distribution $\bar p$, which is reached asymptotically in time, thanks to ergodicity (implied by irreducibility of the Markov chain). To this purpose, it is known, see e.g. \cite{Cassandras08}, that
\begin{eqnarray}
\label{eq.pi}
  \bar p_i &=& \bar p_0 \left( \frac{\mu_1\mu_2 \cdots \mu_i}{\nu_0\nu_1 \cdots \nu_{i-1}} \right), \quad i=1,2,\ldots N \\
\label{eq.p0}
  \bar p_0 &=& \left( 1+ \sum_{i=1}^{N}\frac{\mu_1\mu_2 \cdots \mu_i}{\nu_0\nu_1 \cdots \nu_{i-1}}\right)^{-1}
\end{eqnarray}
From the knowledge of the steady state distribution $\bar p_i, i=0, \ldots, N$, it is possible to compute all relevant moments. For instance, the stationary expected value and variance of $n_1/N$, i.e. the fraction of individuals with opinion 1, are respectively given by
\begin{eqnarray}
\label{eq.mean}
E[n_1/N] &=& \left(\sum_{i=0}^N i \bar p_i\right)/N \\
\label{eq.variance}
{\rm Var}[n_1/N] &=& \frac{1}{N^2} \left( E[n_1^2] - E[n_1]^2\right)
\end{eqnarray}
where $E[n_1^2]=\sum_{i=0}^N i^2 \bar p_i$. These indices are particularly interesting for opinion polls. More precisely, assume that an opinion poll aimed at estimating $n_1/N$ is carried out at a given time $t$ interviewing a random subset of the $N$ agents. The confidence interval of this poll should account for two sources of variability, one related to finite sampling and the other depending on ${\rm Var}[n_1/N]$.

\begin{example}
\label{ex.ui3}
Consider the toy example in which the population consists of $N=3$ agents. By plugging \eqref{eq.mu},\eqref{eq.nu},\eqref{eq.pi},\eqref{eq.p0} into \eqref{eq.mean}, a direct computation shows that
$$
E[n_1/N] = \frac{q_{21}\left(\phi(q,\lambda_1) + q_{12}(\lambda_2-\lambda_1)\right)}{q \phi(q,\lambda_1) +
q_{12} \left(3q(\lambda_2-\lambda_1)+ (\lambda_2^2-\lambda_1^2) \right)}
$$
where
$$
q = q_{12}+ q_{21}, \quad \phi(q,\lambda_1) = 2q^2 + 3q \lambda_1 + \lambda_1^2
$$
When the agents are isolated ($\lambda_1=\lambda_2=0$) it turns out that $E[n_1/N]=q_{21}/q$, as expected since it corresponds of the probability that a single agent has opinion 1. When the agents interact and the influence strength intensity is unbiased ($\lambda_1=\lambda_2=\lambda$) the expected value coincides with that of the isolated case, in accordance with Theorem \ref{th.gattosenzatopo}. This means that the expected percentage of opinions is not affected by the interaction. In the next subsection we will investigate on the probability distribution of the opinions for an arbitrary size of the peer assembly.
\end{example}

\subsection{Unbiased influence}
Consider the model of the Peer Assembly with equal influence intensity, i.e. $\lambda_1=\lambda_2=\lambda$, that will be hereafter referred to as {\it Unbiased Influence Peer Assembly} (UIPA).

When the individuals are not interacting ($\lambda=0$), the network consists of $N$ identical independent Markovian agents. The probability distribution of each agent obeys the differential equation \eqref{eq_sa} with
$$
Q^{[r]}=Q=\begin{bmatrix}
            -q_{12} & q_{12} \\
            q_{21} & -q_{21}
          \end{bmatrix}, \forall r
$$
The steady state probability distribution common to all agents is therefore
$$
\tilde \pi = \begin{bmatrix}
               \tilde \pi_1 \\
               \tilde \pi_2
             \end{bmatrix}
             = \begin{bmatrix}
               q_{21} \\
               q_{12}
             \end{bmatrix}\frac{1}{q_{12}+q_{21}}
$$
Since this distribution is Bernoulli-like, its variance is easily computed as $\tilde \sigma^2 = \tilde \pi_1 (1-\tilde \pi_1)$.
The steady-state mean and variance of $n_1/N$ are therefore $E[n_1/N]=\tilde \pi_1$ and, in view of independence, ${\rm Var}[n_1/N]=\tilde \sigma^2/N$.

Let us now extend the analysis to describe also interacting agents ($\lambda\ge0$) both in the transient and in steady-state.

\medskip
\noindent {\bf Transient analysis}
\medskip

For what concerns the time evolution of the probability distribution of each agent, the following result directly follows from application of Theorem \ref{th.marginalization} and the assumption that the agents are indistinguishable at the initial time.
\begin{proposition}
\label{th.UIPAmean}
  For the UIPA model, it holds that, for each agent $r$,
\begin{eqnarray*}
\nonumber
  \dot \pi^{[r]}_1(t) &=& -(\lambda+q_{12}+q_{21})\pi^{[r]}_1(t)+\frac{\lambda}{N-1} \sum_{k \ne r} \pi^{[k]}_1(t) + q_{21}\\
  &=& -(q_{12}+q_{21})\tilde\pi_1(t) + q_{21}
  \label{eq.UIPA1}
\end{eqnarray*}
\end{proposition}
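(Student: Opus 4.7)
The plan is to specialize Theorem \ref{th.marginalization} to the setting of the Peer Assembly, then exploit the symmetry that comes from the assumption of identically distributed initial opinions.

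First I would instantiate the marginalization equation \eqref{eq.unbmar} in the UIPA regime. Since the graph is complete, $\mathcal N^{[r]} = \mathcal N \setminus \{r\}$ with cardinality $|\mathcal N^{[r]}| = N-1$, and the influence is unbiased with intensity $\lambda$. Taking $j=1$ and $M=2$, the finite sum $\sum_{i=1}^2 q_{i1}\pi_i^{[r]}(t)$ collapses using $q_{11}=-q_{12}$, $\pi_2^{[r]}(t)=1-\pi_1^{[r]}(t)$, and yields
\begin{equation*}
\sum_{i=1}^2 q_{i1}\pi_i^{[r]}(t) = -(q_{12}+q_{21})\pi_1^{[r]}(t)+q_{21}.
\end{equation*}
Substituting this in \eqref{eq.unbmar} and grouping the $\pi_1^{[r]}(t)$ terms together immediately gives the first identity in the statement.

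Next I would handle the second equality. Because the agents share the same stand-alone transition rate matrix $Q$, the same complete neighborhood structure, the same influence parameter $\lambda$, and (by assumption of the Peer Assembly model) i.i.d.\ initial opinions, the marginal distributions are interchangeable for all $r$: $\pi_1^{[r]}(t) = \tilde\pi_1(t)$ for every $r\in\mathcal N$ and every $t\ge 0$. Plugging this common value into the first equation makes the averaged neighbor term reduce to $\lambda\tilde\pi_1(t)$, which cancels the $-\lambda\tilde\pi_1(t)$ contribution from the diagonal influence term, leaving $\dot{\tilde\pi}_1(t) = -(q_{12}+q_{21})\tilde\pi_1(t)+q_{21}$.

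The main conceptual step (not really an obstacle, but worth stating carefully) is justifying that $\pi_1^{[r]}(t)$ is independent of $r$ for all $t$. This follows from the symmetry of the overall Master Markov model under permutations of the agent labels: because $Q_0+A_0$ commutes with any permutation acting on $\mathcal N^{N}$ in the Peer Assembly, and because the initial law $\pi(0)$ is invariant under such permutations (as the agents' initial opinions are i.i.d.), the full joint law $\pi(t)$ is permutation invariant for every $t$, and hence so are all single-agent marginals. Once symmetry is established, the rest is an elementary algebraic simplification of \eqref{eq.unbmar}, so I expect no real computational difficulty.
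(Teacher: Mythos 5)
Your proposal is correct and follows the same route as the paper, which states that the proposition ``directly follows from application of Theorem~\ref{th.marginalization} and the assumption that the agents are indistinguishable at the initial time''; you have simply filled in the algebra and the symmetry justification that the paper leaves implicit. (A marginally lighter way to get $\pi_1^{[r]}(t)=\tilde\pi_1(t)$ for all $r$ is to note that \eqref{eq.unbmar} is already a closed linear ODE system in the marginals, so the symmetric ansatz solving the stand-alone equation is \emph{the} solution by uniqueness, but your permutation-invariance argument for the Master model is equally valid.)
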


As already observed after Theorem \ref{th.marginalization}, this marginal probability evolution coincides with the stand-alone one, irrespective of the value of $\lambda$. In turn, we have that $E[n_1(t)]=N \tilde \pi_1(t)$.

Now, we consider the time evolution of the joint distribution of the opinions of a generic couple of agents, say $r$ and $s$. Note that, in view of the PA assumption, all couples are equivalent. It will be shown that, differently from the univariate marginal probability of a single agent, this marginal joint distribution is affected by the value of the influence intensity $\lambda$.
For short, denote by $\pi_{ij}^{[rs]}(t)=E[\mathcal{I}_{\sigma^{[r]}(t)=i}\mathcal{I}_{\sigma^{[s]}(t)=j}]$ the probability that agent $r$ has opinion $i$ and agent $s$ has opinion $j$ at time $t$. The following result, whose proof is given in the Appendix, holds.
\begin{thm}
\label{th.UIPAjoint}
Consider the UIPA model. Then for each couple of agents $r$, $s$, it results that
\begin{eqnarray}
\nonumber
  \begin{bmatrix}
    \dot \pi_{11}^{[rs]}(t) \\
    \dot \pi_{22}^{[rs]}(t)
  \end{bmatrix} &=& -\left( 2\begin{bmatrix}
                    q_{12} & 0 \\
                    0 & q_{21}
                  \end{bmatrix} + \begin{bmatrix}
                                                q_{21}+\frac{\lambda}{N-1}  \\
                                                q_{12}+\frac{\lambda}{N-1}
                                              \end{bmatrix} \begin{bmatrix}
                                                              1 & 1
                                                            \end{bmatrix} \right)
  \begin{bmatrix}
    \pi_{11}^{[rs]}(t) \\
    \pi_{22}^{[rs]}(t)
  \end{bmatrix} \\
  &&+  \begin{bmatrix}
                                                q_{21}+\frac{\lambda}{N-1}  \\
                                                q_{12}+\frac{\lambda}{N-1}
                                              \end{bmatrix}
\label{eq.UIPA2}
\end{eqnarray}
Moreover
\begin{equation}
\label{eq.pi12}
\pi_{12}^{[rs]}(t)=\pi_{21}^{[rs]}(t) = \frac{1}{2}\left(1-\pi_{11}^{[rs]}(t)-\pi_{22}^{[rs]}(t)\right)
\end{equation}
\end{thm}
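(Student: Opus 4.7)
The plan is to extend the infinitesimal-generator computation used in the proof of Theorem~\ref{th.marginalization} from single-agent indicators to the product $\mathcal{I}_1^r(t)\mathcal{I}_1^s(t)$. First I would expand $E[\mathcal{I}_1^r(t+dt)\mathcal{I}_1^s(t+dt)\mid\Sigma(t)]$ by a case analysis on the current opinions of $r$ and $s$: (i) both at $1$, in which case neither must jump in $[t,t+dt]$; (ii) $r$ at $2$ and $s$ at $1$, so $r$ must switch to $1$ and $s$ must stay; (iii) the symmetric case; (iv) both at $2$, which requires two simultaneous jumps and therefore contributes only $o(dt)$. Weighting these cases with the appropriate rates $q_{ij}+\Delta_j(\cdot,t)$ and taking expectations yields a preliminary ODE for $\pi_{11}^{[rs]}(t)$.

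The non-$\Delta$ terms combine into $-2q_{12}\pi_{11}^{[rs]}$ and $q_{21}\bigl(\pi_1^{[r]}+\pi_1^{[s]}-2\pi_{11}^{[rs]}\bigr)$. For the $\Delta$-terms I would expand each $\Delta_j(\cdot,t)$ as a sum over neighbors, which produces triple expectations of the form $E[\mathcal{I}_i^r\mathcal{I}_j^s\mathcal{I}_l^k]$. Two features of the Peer Assembly make the analysis tractable: the complete graph gives $\mathcal{N}^{[r]}=\mathcal{N}\setminus\{r\}$, and the agents are exchangeable because they share the same stand-alone transition rate matrix and their initial opinions are independent and identically distributed. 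Consequently $E[\mathcal{I}_i^r\mathcal{I}_j^s\mathcal{I}_l^k]$ depends only on the multiset $\{i,j,l\}$ and not on the labels $r,s,k$.

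The decisive step, which I expect to be the main obstacle, is showing that the triple-joint probability $b:=E[\mathcal{I}_1^r\mathcal{I}_1^s\mathcal{I}_2^k]$ (for any $k\ne r,s$) drops out of the final equation. In the ``loss'' terms $-E[\mathcal{I}_1^r\mathcal{I}_1^s\Delta_2(r,t)]$ and its counterpart for $s$, the contribution $k=s$ vanishes because $\mathcal{I}_1^s\mathcal{I}_2^s=0$, and the remaining $N-2$ summands each equal $b$. In the ``gain'' terms $E[\mathcal{I}_2^r\mathcal{I}_1^s\Delta_1(r,t)]$ and its counterpart, the contribution $k=s$ produces $\pi_{21}^{[rs]}$, while the remaining $N-2$ summands again equal $b$; the four occurrences of $b$ cancel exactly, leaving only $\tfrac{2\lambda}{N-1}\pi_{21}^{[rs]}$. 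Rewriting the $q_{21}$ piece via $\pi_1^{[r]}=\pi_{11}^{[rs]}+\pi_{12}^{[rs]}$ and eliminating $\pi_{12}^{[rs]}$ through the total-probability identity $\pi_{11}^{[rs]}+\pi_{22}^{[rs]}+2\pi_{12}^{[rs]}=1$ then produces \eqref{eq.UIPA2} for $\pi_{11}^{[rs]}$; the equation for $\pi_{22}^{[rs]}$ follows by swapping opinions $1$ and $2$. Finally, \eqref{eq.pi12} is immediate from the same total-probability identity together with $\pi_{12}^{[rs]}=\pi_{21}^{[rs]}$, which is itself a consequence of the exchangeability of $r$ and $s$.
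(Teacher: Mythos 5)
Your proposal is correct, and it reaches \eqref{eq.UIPA2} by a genuinely different route than the paper. The paper exploits the lumpability of the Peer Assembly: by exchangeability every configuration with $k$ ones has the same probability $\rho_k(t)$, so $\pi_{11}^{[rs]}(t)=\sum_{k=2}^N\rho_k(t)\binom{N-2}{k-2}$, and the authors compute the $dt$-increment of this sum using the birth--death rates and then close the equation through binomial-coefficient manipulations (essentially Pascal's rule applied to a term $\eta(t)$ that collapses back to $(N-2)\pi_{11}^{[rs]}(t)$). You instead apply the generator of the master chain directly to the product indicator $\mathcal{I}_1^r\mathcal{I}_1^s$ and confront the moment-closure problem head on: the influence terms $\Delta_j$ introduce third-order joint moments $b=E[\mathcal{I}_1^r\mathcal{I}_1^s\mathcal{I}_2^k]$, and you show that the two loss contributions $-\tfrac{\lambda(N-2)}{N-1}b$ and the two gain contributions $+\tfrac{\lambda(N-2)}{N-1}b$ cancel exactly, leaving only the $k=s$ diagonal terms $\tfrac{\lambda}{N-1}(\pi_{12}^{[rs]}+\pi_{21}^{[rs]})$; I checked this bookkeeping and it is right, and the resulting ODE $\dot\pi_{11}^{[rs]}=-2q_{12}\pi_{11}^{[rs]}+(q_{21}+\tfrac{\lambda}{N-1})(1-\pi_{11}^{[rs]}-\pi_{22}^{[rs]})$ matches the first row of \eqref{eq.UIPA2}. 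Both arguments ultimately rest on the same exchangeability of the agents (identical $Q$, complete graph, i.i.d.\ initial opinions), but your version makes explicit \emph{why} the hierarchy of joint distributions closes at second order --- the third-order correlation appears with net coefficient zero --- whereas the paper's version hides this inside the combinatorics of the lumped chain; the paper's approach, in exchange, recycles machinery it has already built for the birth--death analysis. Your treatment of \eqref{eq.pi12} coincides with the paper's.
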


This result is remarkable because, as in the case of the individual marginal distribution, also the bivariate distribution of two agents' opinions can be propagated without requiring higher order joint distributions involving three or more agents.

\medskip
\noindent {\bf Steady-state analysis}
\medskip

First recall that, in view of Theorem \ref{th.gattosenzatopo}, the steady state expectation of the frequency of opinion 1 is
\begin{equation}
\label{eq.UIPAmean}
E[n_1/N]=q_{21}/(q_{12}+q_{21})=\tilde \pi_1
\end{equation}
for any influence intensity $\lambda \ge 0$ and any network size $N$.
This is interesting as it shows that the expectation of opinion frequencies coincides with that of the non-interacting case, irrespective of the influence intensities, when unbiased.
However, the value of $\lambda$ affects the variance, as shown in the next theorem that addresses the steady state situation. The proof can be found in the Appendix.

\begin{thm}
\label{th.var}
For the UIPA network of size $N$, the steady state variance of the frequency of opinion 1 for given influence intensity $\lambda \ge 0$ is
\begin{equation}
\label{eq.var}
{\rm Var}[n_1/N]=\frac{\tilde \sigma^2}{N} \left(1 + \frac{\lambda (N-1)}{\lambda + (q_{12}+q_{21})(N-1)}   \right)
\end{equation}
where
\begin{equation}
\label{eq.varisol}
\tilde \sigma^2 = \frac{q_{12} q_{21}}{(q_{12}+q_{21})^2}
\end{equation}
\end{thm}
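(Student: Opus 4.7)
The plan is to reduce the variance of $n_1/N$ to a computation involving only the steady-state univariate marginal $\tilde\pi_1$ (already known from Theorem \ref{th.gattosenzatopo}) and the steady-state bivariate marginal $\bar\pi_{11}^{[rs]}$ (whose dynamics are given by Theorem \ref{th.UIPAjoint}). By exchangeability of the agents in the PA model, all pairs share the same joint law, so a single 2$\times$2 linear solve will suffice.

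First I would write $n_1 = \sum_{r=1}^N \mathcal{I}_{\sigma^{[r]}=1}$ and expand
$E[n_1^2] = N\,\tilde\pi_1 + N(N-1)\,\bar\pi_{11}^{[rs]}$,
which, together with $E[n_1]=N\tilde\pi_1$, yields
$\mathrm{Var}[n_1/N]=\tilde\sigma^2/N+\tfrac{N-1}{N}\bigl(\bar\pi_{11}^{[rs]}-\tilde\pi_1^2\bigr)$,
where $\tilde\sigma^2=\tilde\pi_1(1-\tilde\pi_1)=q_{12}q_{21}/(q_{12}+q_{21})^2$. Thus the whole problem is to identify the pair-covariance $\bar\pi_{11}^{[rs]}-\tilde\pi_1^2$.

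Next, I would obtain $\bar\pi_{11}^{[rs]}$ by setting the left-hand side of \eqref{eq.UIPA2} to zero. Denoting $a=q_{21}+\lambda/(N-1)$, $b=q_{12}+\lambda/(N-1)$, $q=q_{12}+q_{21}$, the steady-state system reads
\begin{equation*}
\begin{bmatrix} 2q_{12}+a & a \\ b & 2q_{21}+b \end{bmatrix}
\begin{bmatrix} \bar\pi_{11}^{[rs]} \\ \bar\pi_{22}^{[rs]} \end{bmatrix}
= \begin{bmatrix} a \\ b \end{bmatrix}.
\end{equation*}
Its determinant simplifies to $2q\bigl(q+\lambda/(N-1)\bigr)$ after the $ab$-terms cancel, and Cramer's rule gives the clean expression
$\bar\pi_{11}^{[rs]} = \tilde\pi_1\,\bigl(q_{21}+\lambda/(N-1)\bigr)/\bigl(q+\lambda/(N-1)\bigr)$.
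Substituting $\tilde\pi_1 q=q_{21}$ into $\bar\pi_{11}^{[rs]}-\tilde\pi_1^2$ collapses the numerator to $\tilde\pi_1(1-\tilde\pi_1)\,\lambda/(N-1)$, so the covariance equals $\tilde\sigma^2\,\lambda/\bigl(\lambda+q(N-1)\bigr)$. Plugging this into the variance formula above and factoring $\tilde\sigma^2/N$ produces exactly \eqref{eq.var}.

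The only nontrivial step is the algebraic simplification of the 2$\times$2 inverse and the ensuing covariance — both are short but require the cancellation $\tilde\pi_1 q = q_{21}$ to reveal the $(1-\tilde\pi_1)$ factor. Everything else (exchangeability of pairs, ergodic existence of the steady state, and the marginal identity $E[n_1]=N\tilde\pi_1$) is supplied by the earlier results, so no additional machinery is needed.
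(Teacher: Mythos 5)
Your proposal is correct and follows essentially the same route as the paper: decompose ${\rm Var}[n_1]$ into $N\tilde\pi_1+N(N-1)\bar\pi_{11}^{[rs]}-N^2\tilde\pi_1^2$ and obtain $\bar\pi_{11}^{[rs]}$ from the steady state of \eqref{eq.UIPA2}. The only (immaterial) difference is that you solve the $2\times 2$ system by Cramer's rule, whereas the paper uses the rank-one inverse identity $(D+bc^\prime)^{-1}b=D^{-1}b/(1+c^\prime D^{-1}b)$; both yield the same $\bar\pi_{11}^{[rs]}$ and hence \eqref{eq.var}.
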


The formula \eqref{eq.var} of Theorem \ref{th.var} reveals that, for a fixed $\lambda$, the variance of the frequency $n_1/N$ tends to zero as the size $N$ of the network goes to $\infty$. Hence, the ergodic distribution of $n_1/N$ becomes deterministic for large networks. This is a consequence of the completeness of the interaction graph and the fact that the total influence on each agent linearly depends on the fraction of neighbors' opinions, and this fraction converges to its expectation in view of the ergodic law of large numbers.

As for the case of a network with fixed size $N$, the variance \eqref{eq.var} is an increasing function of the influence intensity $\lambda$. It varies monotonically from the variance $\tilde \sigma^2/N$ of the non-interacting case to
a value coinciding with the variance $\tilde \sigma^2$ of the opinion of a single isolated agent. The asymptotic value with $\lambda \to \infty$ reflects a sort of synchronization of the network producing a herd behaviour, with all agents changing their opinions unanimously as one.
Rather interestingly, the herding phenomenon was observed in previous studies on social learning, by using quite different models and assumptions, see e.g. \cite{Bowden08}, \cite{Gaio02}.

\subsection{Biased influence}
When $\lambda_1 \ne \lambda_2$ there is an asymmetry in the influence intensity. If $\lambda_1 > \lambda_2$, each agent exerts more influence on its neighbors when its opinion is 1. It is expected that, compared to the non-interacting network, the distribution is affected in favor of the more influential opinion, increasing the average of $n_1(t)$.
An analytical study of the effect of $\lambda_1$ and $\lambda_2$ on the opinion dynamics becomes more difficult since the marginalization results cannot be applied. Nevertheless, the transient can be studied from the numerical solution of \eqref{eq.trans}, while the steady-state moments can be computed from the steady-state distribution specified by \eqref{eq.pi}, \eqref{eq.p0}.

\begin{figure}[tbh]
\centering
\epsfig{file=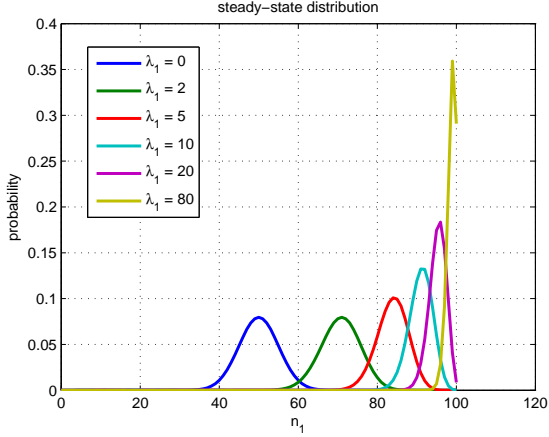, height=6cm}
\caption{Effect of the unilateral promotion on the steady-state distribution of $n_1$ for a PA composed of $N=100$ agents with $q_{12}=q_{21}=1$. The curves correspond to different values of the intensity parameter $\lambda_1$, while $\lambda_2$ is kept equal to zero.}
\label{Fig.BIPA_dist}
\end{figure}

\begin{figure}[tbh]
\centering
\epsfig{file=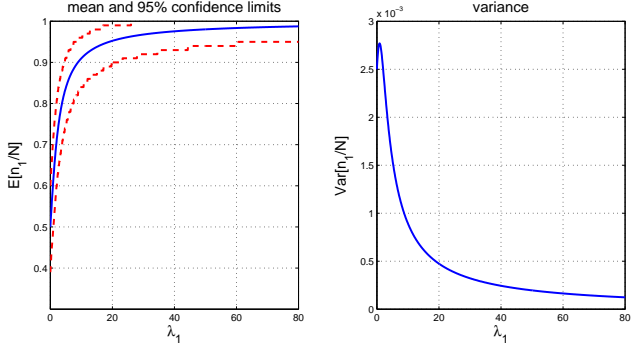, height=4.7cm}
\caption{Effect of the unilateral promotion on the mean and the variance of $n_1/N$ for a PA composed of $N=100$ agents with $q_{12}=q_{21}=1$ for different values of $\lambda_1$. The red dashed curves in the left panel correspond to the 2.5 and 97.5 percentiles.}
\label{Fig.BIPA_mv}
\end{figure}

A special important case occurs when one the two intensity parameters is zero, say $\lambda_2=0$. In this unilateral promotion case, the social influence of opinion 1 can be enhanced through the tuning knob $\lambda_1$, while agents with opinion 2 do not interact in the network. It is expected that increasing $\lambda_1$ pushes the probability distribution of $n_1$ to the right, with a limit deterministic distribution concentrated in $n_1=N$, that corresponds to an unanimous consensus on opinion 1. This is confirmed by the plots of the steady-state distribution computed for a PA of $N=100$ agents with $q_{12}=q_{21}=1$ and different values of $\lambda_1$ from eqs. \eqref{eq.pi}, \eqref{eq.p0}, reported in Figure \ref{Fig.BIPA_dist}.
The mean and variance of $n_1/N$ against $\lambda_1$ are shown in Figure \ref{Fig.BIPA_mv}. It can be noticed that the mean rises rapidly with $\lambda_1$ and approaches 1 asymptotically, while the profile of the variance is not monotonic and decays to zero more slowly.

\begin{figure}[th]
\centering
\epsfig{file=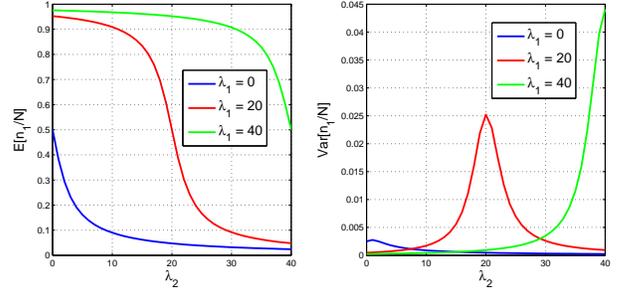, height=4.1cm}
\caption{Mean and variance of $n_1/N$ for a PA of $N=100$ agents with $q_{12}=q_{21}=1$, as functions of $\lambda_2$, when $\lambda_1$ assumes three fixed values.}
\label{Fig.BIPA_mvbis}
\end{figure}

The situation becomes more complex when both the intensity parameters $\lambda_1$ and $\lambda_2$ are different from zero. In particular, Figure \ref{Fig.BIPA_mvbis} displays the mean and variance of $n_1/N$ for a PA of $N=100$ agents with $q_{12}=q_{21}=1$, as functions of $\lambda_2$, when $\lambda_1$ assumes three fixed values. A remarkable feature is the peak of the variance when the two intensities are close to each other.

\section{Simulation examples}
\label{examples}
In this section we present and discuss several simulations in order to illustrate the previous theoretical results as well as explore opinion behaviors under assumptions for which analytical results are not yet available

\subsection{Peer Assembly with unbiased influence}
In order to demonstrate some properties of the PA, we carried out simulations with $N=100$ agents and unbiased influence with different values of $\lambda$, namely $\lambda=0$ (noninteracting network), $\lambda=2$ and $\lambda=10$.
The entries of the stand-alone transition rate matrix of each agent are $q_{12}=q_{21}=1$, corresponding to steady-state probability $\tilde \pi_1 = \tilde \pi_2 = 0.5$ and variance $\tilde \sigma^2 = 0.25$.

The theoretical steady-state values for $E[n_1/N]$ and $Var[n_1/N]$ given in \eqref{eq.UIPAmean} and \eqref{eq.var} are reported in Table \ref{Tab.UIPA}. Recall that the value of $\lambda$ only affects the variance. Conversely, the mean, both in the transient and in steady-state is not influenced by the value of $\lambda$, see Proposition \ref{th.UIPAmean} and Theorem \ref{th.var}.

\begin{table}[h]
  \centering
\begin{tabular}{r|ccc|}
&$\lambda=0$&$\lambda=2$&$\lambda=10$ \\ \hline
$E[n_1/N]$&0.5&0.5&0.5\\ \hline
$Var[n_1/N]$&0.0025&0.0050&0.0144\\ \hline
\end{tabular}
 \caption{Steady-state mean and variance of an UIPA model with $N=100$, $q_{12}=q_{21}=1$, and different values of $\lambda$.}
 \label{Tab.UIPA}
\end{table}

The simulation were performed using the birth-death chain starting from three different initial distributions $p(0)$: (i) binomial, (ii) uniform and (iii) deterministic. The binomial distribution corresponds to the probability of having $k$ agents among $N$ in opinion 1 under the mutual independence assumption. The uniform distribution assumes that the probability of $k$ agents having opinion 1 is constant for all $k$. The deterministic distribution assumes that no agent has opinion 1 with probability 1.

By combining the three initial distributions and the three values of $\lambda$, 9 scenarios were simulated, drawing 5 realizations of $n_1(t)/N$ for each one. The results are displayed in Figure \ref{Fig.UIPA_sim1}, where the realizations are plotted along with the theoretical mean and the 2.5 and 97.5 percentiles. In each row, the initial distribution is the same and the value of $\lambda$ varies. While the time profile of the mean is unchanged, the width of the $95\%$ interval gets larger as the intensity influence increases, leading to increased variability of the sample paths. For a fixed value of $\lambda$ each column shows the different transient behaviour caused by the three initializations. In view of ergodicity, the steady-state distribution is always the same, which reflects on the steady-state mean and percentiles.

\begin{figure}[th]
\epsfig{file=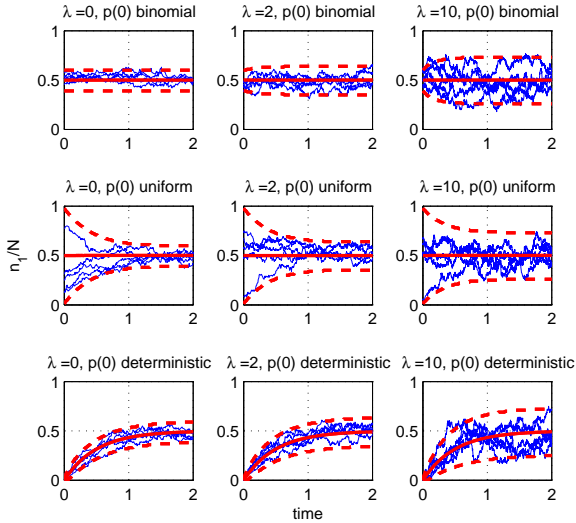, height=7.8cm}
\caption{Simulations of the UIPA model with $N=100$, $q_{12}=q_{21}=1$, different values of $\lambda$ and different initial distributions. For each scenario 5 realizations are displayed. The red lines indicate the theoretical time profile of the mean value (solid) and the 2.5 and 97.5 percentiles (dashed).}
\label{Fig.UIPA_sim1}
\end{figure}

A second set of simulations was performed to illustrate the herd behaviour of the social network when the parameter $\lambda$ is large. For an UIPA of $N=20$ agents, three scenarios are depicted in Figure \ref{Fig.UIPA_herd}, showing the effect of different values of $\lambda$, namely $\lambda=10,20,200$. It is assumed that the initial distribution $p(0)$ is binomial and coincides with the steady-state distribution in the non-interacting case. The red curves represent the theoretical mean and the $2.5$ and $97.5$ percentiles. A single realization is plotted for each $\lambda$.
The three lower panels display the steady-state distribution $\bar p_i$ of the number of agents sharing opinion 1.

For the smallest $\lambda$ (left panel), the steady-state distribution, though different from the binomial, is still unimodal. As $\lambda$ increases this distribution becomes almost uniform (middle panel) and eventually converges towards a bimodal distribution concentrated in the extreme points (right panel). As predicted by \eqref{eq.var}, the steady-state variance is monotonically increasing with $\lambda$. This is also reflected in the increased variability of the realizations. In particular, the realization with $\lambda=200$ exhibits a herd behaviour, with significant dwelling times in the two extreme situations where the agents are unanimous.

\begin{figure}[th]
\centering
\epsfig{file=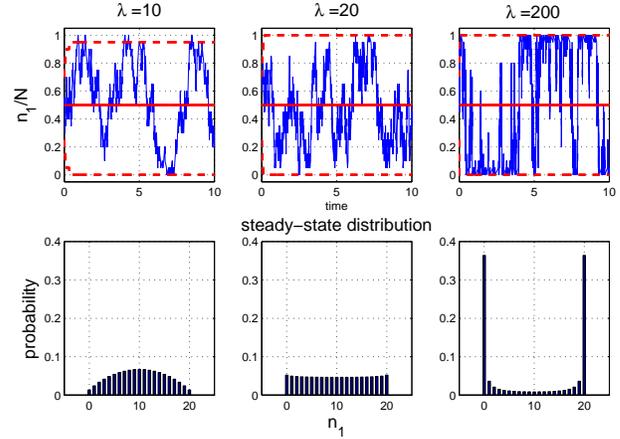, height=6cm}
\caption{Simulations of the UIPA model with $N=20$, $q_{12}=q_{21}=1$, different values of $\lambda$ and initial binomial distribution. For each scenario a single realization is displayed in the upper panels. The red lines indicate the theoretical time profile of the mean value (solid) and the 2.5 and 97.5 percentiles (dashed). The lower panels show the corresponding steady-state distribution of the number of agents sharing opinion 1. For $\lambda=200$, the emergence of the herd behaviour can be observed.}
\label{Fig.UIPA_herd}
\end{figure}

\subsection{Peer Assembly with biased influence}
We now consider the case of a PA with biased influence.
In particular, we simulated the effect of a gradual stepwise increase of $\lambda_2$ from 0 to 40 when $\lambda_1$ is kept constant and equal to 20, see Figure \ref{Fig.BIPA_oprev}. The increase of $\lambda_2$ produces a majority reversal from opinion 1 to opinion 2. Remarkably, this entails an intermediate turbulent phase where the variance is significantly higher than at start or arrival.

\begin{figure}[th]
\centering
\epsfig{file=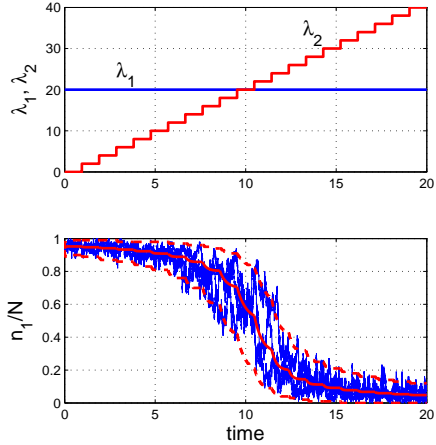, height=6.5cm}
\caption{Simulations of the PA model with $N=100$ agents, $q_{12}=q_{21}=1$, $\lambda_1=20$ and a stepwise pattern of the intensity parameter $\lambda_2$. Three realizations are displayed along with the time profiles of the mean (red solid) and the 2.5 and 97.5 percentiles (red dashed).}
\label{Fig.BIPA_oprev}
\end{figure}

In order to appreciate the dynamic effect of stepwise changes of both influence intensities, a further simulation experiment was designed according to the following setup. The total time interval $[0,10]$ is partitioned in four segments $\mathcal T_1 =[0 \quad 1]$, $\mathcal T_2 =[1\quad 4]$, $\mathcal T_3 =[4\quad 7]$, $\mathcal T_4 =[7\quad 10]$. As for the intensities, it was assumed that
$\lambda_1=\lambda_2=0$, for $t\in \mathcal T_1$;
$\lambda_1=20, \lambda_2=0$, for $t\in \mathcal T_2$;
$\lambda_1=\lambda_2=20$, for $t\in \mathcal T_3$;
$\lambda_1=16, \lambda_2=20$, for $t\in \mathcal T_4$.
In other words, starting from a non interacting network during $\mathcal T_1$, the social network is first subject to a unilateral promotion in favor of opinion 1 during $\mathcal T_2$. Then, by switching $\lambda_2$ to the same value as $\lambda_1$, an UIPA configuration is maintained during $\mathcal T_3$. Finally, in $\mathcal T_4$ the influence is biased in favor of opinion 2.
Figure \ref{Fig.BIPA_step} displays the theoretical mean of $n_1/N$ (red) along with the $95\%$ confidence limits (red dashed) and 3 Monte Carlo realizations.
It is worth noting the changes of the mean as well as the width of the confidence band. In particular, in $\mathcal T_1$ and $\mathcal T_3$ the steady-state mean is the same (since $\lambda_1=\lambda_2$ in both intervals) but the variance is much larger when the intensities are nonzero. It is also remarkable that in $\mathcal T_4$, a ratio $\lambda_1/\lambda_2=0.8$ definitely moves the average in favor of opinion 2.

\begin{figure}[th]
\centering
\epsfig{file=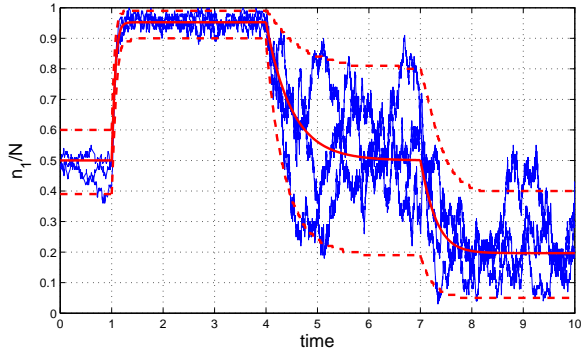, height=5cm}
\caption{Simulations of the PA model with $N=100$ agents, $q_{12}=q_{21}=1$, and a stepwise pattern of the intensity parameters $\lambda_1$ and $\lambda_2$. Three realizations are displayed along with the time profiles of the mean (red solid) and the 2.5 and 97.5 percentiles (red dashed).}
\label{Fig.BIPA_step}
\end{figure}

\subsection{Network with general topology}
The previous simulations regarded the special case of binary opinions and a complete interaction graph.
While the extension to more than two opinions could be worked out through multi-dimensional birth and death chains,
substantial analytical difficulties arise when the topology of the social network departs from the complete graph.
Nevertheless, one can still resort to the Markov Master model of Section \ref{MasterMarkovModel} and carry out Monte Carlo simulation studies.

For illustrative purposes, we discuss the effect of different network topologies for a two-opinion model with $N=100$ individuals, all having stand-alone transition rate matrix defined by $q_{12}=q_{21}=1$. As for the influence, both the unbiased and biased cases are considered.

The considered topologies are: (a) non-interacting, (b) complete (Peer Assembly), (c) small-world, (d) star. The smallworld topology is obtained according to the model introduced in \cite{Strogatz98} letting $k=1$ and $p=0.2$. In the star topology, a central agent is connected to $N-1$ peripheral agents, which do not communicate with each other.

The unbiased influence case with $\lambda_1=\lambda_2=10$ is considered first.
In Figure \ref{Fig.multitopo-u}, for each topology we report a single Monte Carlo simulation of $n_1/N$ (left column), and the estimate of the steady-state distribution of $n_1/N$ computed from 10 replications of the Monte Carlo simulation (right column).
Moreover the sample estimates of the steady-state mean and variance of $n_1/N$ are reported in Table \ref{Tab.multitopo-u}.

\begin{table}[h]
  \centering
\begin{tabular}{r|cccc|}
&(a)&(b)&(c)&(d) \\ \hline
$E[n_1/N]$&0.4978&0.4943&0.5111&0.5053\\ \hline
$Var[n_1/N]$&0.0025&0.0135&0.0099&0.1322 \\ \hline
\end{tabular}
  \caption{Estimated steady-state mean and variance for different topologies and unbiased influence.}
  \label{Tab.multitopo-u}
\end{table}

\begin{figure}[th]
\centering
\epsfig{file=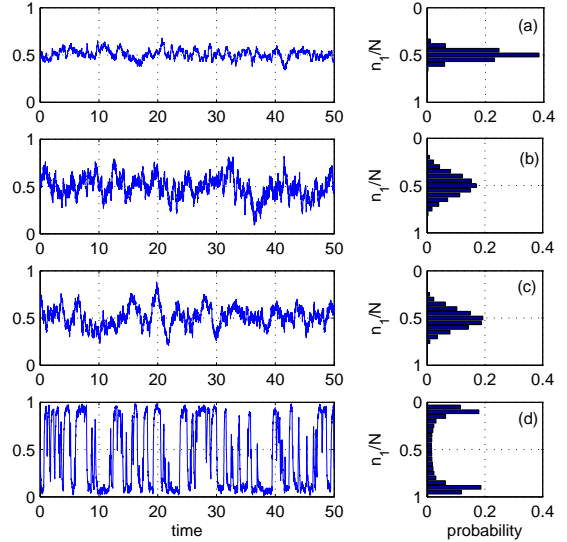, height=8cm}
\caption{Simulations and steady state distributions of $n_1/N$ with $N=100$ agents, $q_{12}=q_{21}=1$, $\lambda_1=\lambda_2=10$ and different topologies: (a) noninteracting, (b) complete, (c) smallworld, (d) star.}
\label{Fig.multitopo-u}
\end{figure}

In accordance with Theorem \ref{th.gattosenzatopo}, it appears that the expectation of $n_1/N$ is not affected by the topology: in fact all the sample means are close to 0.5. The steady-state variance instead depends on the topology. It is the smallest in the non-interacting case. Indeed, the distribution of $n_1$ is binomial and $Var[n_1]/N^2 = \tilde\sigma^2/N = 0.0025$. When the graph is complete the sample variance increases, yielding a value in good agreement with the theoretical value derived from eq. \eqref{eq.var} for the UIPA model ($Var[n_1/N]=0.012$). The decreased connectivity of the smallworld topology explains why the sample variance of case (c) is slightly smaller. Finally, the star topology triggers large opinions waves giving rise to a bimodal distribution of $n_1/N$, that justifies the large value of the sample variance in case (d).

Figure \ref{Fig.multitopo-b1}  and Table \ref{Tab.multitopo-b1} correspond to the biased case with $\lambda_1=1$ and $\lambda_2=0$, i.e. unilateral promotion. For the sake of comparison, the non-interacting case (a) is also displayed in Figure \ref{Fig.multitopo-b1}.

From the estimates of the mean, the possible effect of the topology on the expectation of $n_1/N$ is hardly appreciated. Whether or not this is due to a general topology-free invariance property is an open question that would deserve further investigation. On the contrary, the topology affects the variance. In particular, for the star topology the variance is larger and the steady-state distribution is flatter.

\begin{table}[h]
  \centering
\begin{tabular}{r|ccc|}
&(b)&(c)&(d) \\ \hline
$E[n_1/N]$&0.6167&0.6085&0.6152\\ \hline
$Var[n_1/N]$&0.0024&0.0026&0.0060 \\ \hline
\end{tabular}
  \caption{Estimated steady-state mean and variance for different topologies and biased influence.}
  \label{Tab.multitopo-b1}
\end{table}

\begin{figure}[th]
\centering
\epsfig{file=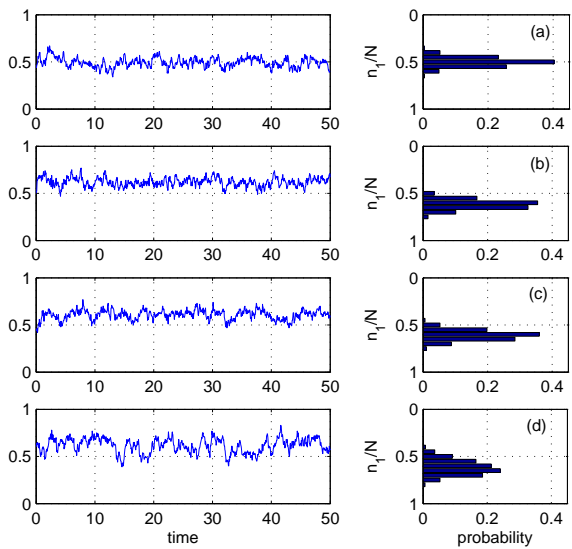, height=8cm}
\caption{Simulations and steady state distributions of $n_1/N$ with $N=100$ agents, $q_{12}=q_{21}=1$, $\lambda_1=1, \lambda_2=0$ and different topologies: (a) noninteracting, (b) complete, (c) smallworld, (d) star.}
\label{Fig.multitopo-b1}
\end{figure}

\section{Discussion and concluding remarks}
\label{Conclusions}
A major contribution of the present work is the proposal of a stochastic multi-agent model for opinion dynamics that explicitly accounts for a centralized tuning of the strength of interaction between individuals within a social network.
The aim is to gain insight on the effects of filtering algorithms managed by social network platforms.

The proposed model, consisting of Markovian agents, is very flexible and can be easily adapted to describe a variety of situations, including different topologies and the presence of heterogeneous agents. In this respect, it is important to note that the overall model preserves the Markov chain structure, thus being viable for Monte Carlo simulation.

There are special cases that can be studied analytically in order to highlight the emergence of particular behaviours. One such case is the so-called Peer Assembly that assumes binary opinions, identical agents, and a complete graph.
Thanks to lumpability into a birth-death chain, we have been able to obtain closed formulas for the evolution of the opinions distribution.

A key marginalization result (Theorems \ref{th.marginalization} and \ref{th.gattosenzatopo}) has been worked out for  arbitrary graph topology and number of opinions, provided that the agents are indistinguishable and the influence is unbiased.
Through marginalization, it has been demonstrated that the influence strength does not affect the expected number of agents sharing a certain opinion, but does affect the variance, which is associated with opinion fluctuations. This might suggest tuning strategies of the influence parameters in order to artificially freeze or excite the variability of opinions.
For instance, imagine that Facebook tunes the News Feed algorithm during a political election campaign so as to decrease the probability that a generic user is exposed to friends' posts with political content (whatever its orientation).
This is equivalent to decreasing the value of the parameter $\lambda$ in an unbiased context. Our model predicts no change in the mean number of individuals sharing a certain opinion, but a decrease of the variance, which prevents the occurrence of large random deviations from the mean. Hence, the decreased exposure to political posts would play in favor of the leading party, whose supremacy would be more hardly challenged.

Conversely, a biased influence eases the spread of some opinions to the detriment of the others. For instance, algorithmic curbing of fake news might produce such an effect. Filtering the news according to the authoritativeness of the source could spoil opinions supported by independent blogs and nongovernmental organizations against those broadcasted by mainstream media.

A possible interesting research development is the analysis and design of feedback control policies aimed at driving the collective opinion to a desired target. Although centralized control may evoke worrisome scenarios, this kind of intervention could also be used to break ``filter bubbles" \cite{Bozdag15} in order to foster diversity of perspectives among the users.

\section*{Appendix}
{\bf Proof of Theorem \ref{th.UIPAjoint}}
\begin{proof}
Consider a generic pair $(r,s)$ of agents.
First of all, note that \eqref{eq.pi12} immediately follows from the observation that the two agents are indistinguishable as a consequence of the assumption on the initial probabilities and the joint probabilities sum up to one.

Now, recall that the symbol $\Sigma(t)$ stands for the state of the Master Markov model. Note that all states $\Sigma(t)$ having $k$ entries equal to 1, $0\le k\le N$, share the same probability, here denoted by $\rho_k(t)$. By standard combinatorial calculus, it follows that ${\rm Pr}\{n_1(t)=k\}=\rho_k(t)\binom{N}{k}$. So:
\begin{eqnarray*}
  \pi_{11}^{[rs]}(t+dt) &=& {\rm Pr}\{\sigma^{[r]}(t+dt)=1, \sigma^{[s]}(t+dt)=1\} \\
  &&\hspace{-5mm}=\sum_{k=2}^N \left(1-2dt\left(q_{12}+\lambda \frac{N-k}{N-1} \right) \right)\rho_k(t)\binom{N-2}{k-2} \\
  &&\hspace{-5mm} + 2 \sum_{k=2}^N dt\left(q_{21}+\lambda \frac{k-1}{N-1} \right) \rho_{k-1}(t)\binom{N-2}{k-2}
\end{eqnarray*}
Hence
\begin{eqnarray*}
  \pi_{11}^{[rs]}(t+dt) &=& \sum_{k=2}^N \rho_k(t)\binom{N-2}{k-2}\\
   &&+ 2dt\sum_{k=2}^N \left(q_{21}+\lambda \frac{k-1}{N-1} \right)\rho_{k-1}(t)\binom{N-2}{k-2} \\
   &&- 2dt \sum_{k=2}^N \left(q_{12}+\lambda \frac{N-k}{N-1} \right) \rho_k(t)\binom{N-2}{k-2}
\end{eqnarray*}
Observing that
\begin{eqnarray*}
  \sum_{k=2}^N \rho_k(t)\binom{N-2}{k-2} &=& \pi_{11}^{[rs]}(t) \\
  \sum_{k=2}^N \rho_{k-1}(t)\binom{N-2}{k-2} &=& \pi_{12}^{[rs]}(t)
\end{eqnarray*}
we obtain
\begin{eqnarray}
\nonumber  \pi_{11}^{[rs]}(t+dt) &=& \pi_{11}^{[rs]}(t)+ 2dt \left( - q_{12}\pi_{11}^{[rs]}(t) + q_{21}\pi_{12}^{[rs]}(t)\right) \\
   &&\hspace{-1.8cm} + \frac{2\lambda dt}{N-1}
   \sum_{k=2}^N \left((k-1) \rho_{k-1}(t) - (N-k) \rho_k(t)\right)\binom{N-2}{k-2}
   \nonumber \label{ep.pi11}
\end{eqnarray}
Note that
\begin{eqnarray}
&&\sum_{k=2}^N \left((k-1) \rho_{k-1}(t) - (N-k) \rho_k(t)\right)\binom{N-2}{k-2}\nonumber\\
&& \hspace{1cm}= \pi_{12}^{[rs]}(t) - (N-2) \pi_{11}^{[rs]}(t)\nonumber\\
&& \hspace{1.5cm}+ \sum_{k=2}^N (k-2)(\rho_{k-1}(t)+\rho_k(t)) \binom{N-2}{k-2}
\label{eq.proofap1}
\end{eqnarray}
Now, consider the term
$$
\eta(t) = \sum_{k=2}^N (k-2)(\rho_{k-1}(t)+\rho_k(t)) \binom{N-2}{k-2}
$$
By observing that
$$
(k+1)\binom{N-2}{k+1} = (N-2)\binom{N-3}{k}
$$
it follows that
\begin{eqnarray*}
  \eta(t) &=& \sum_{k=0}^{N-3} (k+1)(\rho_{k+2}(t)+\rho_{k+3}(t)) \binom{N-2}{k+1} \\
  &=& (N-2) \sum_{k=0}^{N-3} (\rho_{k+2}(t)+\rho_{k+3}(t)) \binom{N-3}{k} \\
  &=& (N-2) \sum_{k=0}^{N-3}\rho_{k+2}(t)\binom{N-3}{k} \\
  &&+ (N-2) \sum_{k=1}^{N-2}\rho_{k+2}(t)\binom{N-3}{k-1} \\
  &=& (N-2) \left( \rho_{2}(t) + \rho_{N}(t)\right) \\
  &&+ (N-2) \sum_{k=1}^{N-3}\rho_{k+2}(t)\left(\binom{N-3}{k}+\binom{N-3}{k-1}\right)
\end{eqnarray*}
Noting that
$$
\binom{N-3}{k}+\binom{N-3}{k-1} = \binom{N-2}{k}
$$
the previous expression becomes
\begin{eqnarray*}
  \eta(t) &=& (N-2) \left( \rho_{2}(t) + \rho_{N}(t) + \sum_{k=1}^{N-3}\rho_{k+2}(t)\binom{N-2}{k} \right) \\
          &=& (N-2) \sum_{k=0}^{N-2}\rho_{k+2}(t)\binom{N-2}{k} \\
          &=& (N-2) \sum_{k=2}^{N}\rho_{k}(t)\binom{N-2}{k-2} = (N-2) \pi_{11}^{[rs]}(t)
\end{eqnarray*}
Then, recalling \eqref{eq.proofap1}, \eqref{eq.pi12} and taking the limit for $dt$ tending to zero, the first row of eq. \eqref{eq.UIPA2} is demonstrated.
As for the second row, it follows from symmetry by simply exchanging the indices.
\end{proof}

\medskip
{\bf Proof of Theorem \ref{th.var}}
\begin{proof}
First of all note that
\begin{eqnarray}
\nonumber
{\rm Var}[n_1]&=&E\left[\left(\sum_{r=1}^N \mathcal I_{\sigma^{[r]}=1}]\right)^2\right] - \left(E[n_1] \right)^2\\
&=& N\tilde\pi_1 + N(N-1) \tilde \pi_{11} - N^2 \tilde\pi_1^2
\label{eq.varn}
\end{eqnarray}
where $\tilde\pi_1$ is given by \eqref{eq.UIPAmean} and $\tilde \pi_{11}$ is the steady-state limit of $\pi_{11}(t)$ satisfying eq. \eqref{eq.UIPA2} (which exists in view of ergodicity). From \eqref{eq.UIPA2} at the equilibrium we obtain that
$$
\tilde \pi_{11} = \begin{bmatrix}
  1 & 0
\end{bmatrix}(D + b c^\prime)^{-1}b
$$
with
$$
D=2\begin{bmatrix}
                    q_{12} & 0 \\
                    0 & q_{21}
                  \end{bmatrix}, \quad
b=\begin{bmatrix}
                                                q_{21}+\frac{\lambda}{N-1}  \\
                                                q_{12}+\frac{\lambda}{N-1}
                                              \end{bmatrix}, \quad
c=\begin{bmatrix}                              1 \\ 1
                                                            \end{bmatrix}
$$
By observing that $(D + b c^\prime)^{-1}b = D^{-1}b/(1+c^\prime D^{-1}b)$, it results that
\begin{eqnarray*}
  \tilde \pi_{11}&=& \frac{1}{(1+c^\prime D^{-1}b)}\begin{bmatrix}
                       1 & 0
                     \end{bmatrix} D^{-1}b\\
  &=& \frac{q_{21}\left(\lambda + q_{21}(N-1)\right)}{(q_{12}+q_{21})\left(\lambda + (q_{12}+q_{21})(N-1)\right)}
\end{eqnarray*}
Now, by replacing $\tilde \pi_1$ and $\tilde \pi_{11}$ in \eqref{eq.varn} and dividing by $N^2$, the result directly follows.
\end{proof}

\end{document}